\renewcommand\footnotetextcopyrightpermission[1]{} 
  \providecommand\BibTeX{{%
    \normalfont B\kern-0.5em{\scshape i\kern-0.25em b}\kern-0.8em\TeX}}}
\definecolor{shadecolor}{rgb}{0.878906, 0.878906, 0.878906}
\providecommand{\algorithmname}{Algorithm}
\theoremstyle{plain}
\newtheorem{thm}{\protect\theoremname}
\theoremstyle{definition}
\newtheorem{defn}{\protect\definitionname}
\theoremstyle{plain}
\newtheorem{lem}{\protect\lemmaname}
\theoremstyle{plain}
\algnewcommand\algorithmicforeach{\textbf{for each}}
\algnewcommand{\AND}{\algorithmicand}
\newcommand{\com}[1]{\textbf{\color{blue} (COMMENT: #1)}}
\else\newcommand{\com}[1]{}\fi
\newcommand{\beq}{\begin{equation}}\newcommand{\eeq}{\end{equation}}\newcommand{\bea}{\begin{eqnarray}}\newcommand{\eea}{\end{eqnarray}}\newcommand{\bda}{\begin{eqnarray*}}\newcommand{\eda}{\end{eqnarray*}}\newcommand{\bdalign}{\begin{align*}}\newcommand{\edalign}{\end{align*}}
\DeclareMathOperator{\EX}{\mathbb{E}}
\providecommand{\corollaryname}{Corollary}
\providecommand{\definitionname}{Definition}
\providecommand{\lemmaname}{Lemma}
\providecommand{\theoremname}{Theorem}
\newcommand{\subsubsectionCUSTOM}[1]{\subsubsection{\textbf{#1}}{~}\\}
\begin{document}

\title{Optimization of Cryptocurrency Miners' Participation in Ancillary Service Markets}
\thispagestyle{empty}
\pagestyle{plain}
\pagestyle{empty}
\thispagestyle{plain}

\author{Ali Menati}
\affiliation{
  \institution{Department of Electrical Engineering}
   \country{Texas A\&M University}}

\author{Yuting Cai}
\affiliation{
  \institution{Department of Electrical Engineering}
   \country{Texas A\&M University}}

\author{Rayan El Helou}
\affiliation{
  \institution{Department of Electrical Engineering}
   \country{Texas A\&M University}}

   \author{Chao Tian}
\affiliation{
  \institution{Department of Electrical Engineering}
   \country{Texas A\&M University}}

\author{Le Xie}
\affiliation{
  \institution{Department of Electrical Engineering}
   \country{Texas A\&M University}}

\authornote{Corresponding  author: Le Xie.}

\begin{abstract}
Proof-of-work computation used in cryptocurrencies has witnessed significant growth in the U.S. and many other regions around the world. One of the most significant bottlenecks for the scalable deployment of such computation is its energy demand. On the other hand, the electric energy system is increasing the need for flexibility for energy balancing and ancillary services due to the intermittent nature of many new energy resources such as wind and solar. In this work, we model the operation of a cryptomining facility with heterogeneous mining devices participating in ancillary services. We propose a general formulation for the cryptominers to maximize their profit by strategically participating in ancillary services and controlling the loss of mining revenue, which requires taking into account the disparity in the efficiency of the mining machines. The optimization formulation is considered for both offline and online scenarios, and optimal algorithms are proposed to solve these problems. As a special case of our problem, we investigate cryptominers' participation in frequency regulation, where the miners benefit from their fast-responding devices and contribute to grid stability. In the second special setting, a risk-aware algorithm is proposed to jointly minimize the cost and the risk of participating in ancillary services with homogeneous mining devices. Simulation results based on real-world Electric Reliability Council of Texas (ERCOT) traces show more than $20\%$ gain in profit, highlighting the advantage of our proposed algorithms.
\end{abstract}

\pagestyle{empty}
\thispagestyle{plain}
\settopmatter{printfolios=true}
\maketitle 
\pagenumbering{gobble}

\section{Introduction}
Cryptomining is emerging as one of the fastest-growing consumers of electric power in many regions, such as Texas and Georgia in the U.S., creating significant challenges for electric grid operators, both in planning and in real-time operations. According to the report from the White House, cryptomining currently consumes $0.9\%$ to $1.7\%$ of total electricity in the U.S., and it is still rapidly growing \cite{Whitehouse2022}. At the same time, there is a growing need for flexibility in the electric grid to provide ancillary services due to the intermittent nature of renewable energy sources.

Ancillary services are one of the most critical mechanisms for providing the flexibility needed for the large-scale integration of variable energy resources \cite {wu2021open,el2022impact}. Nevertheless, an inconvenient truth is that most energy consumption demands tend to be positively correlated with the load curve of the grid, i.e., increasing power demand when the grid is under stress. 
Thus, the flexibility that can be extracted from these demands is limited due to the primary need for serving customers. 

Unlike most other demands, cryptomining demand can have much stronger flexibility during times of need when the grid is stressed. As shown in Figure 1, 
the actual cryptomining load in the ERCOT region has a significant negative correlation (-0.757) with the system-wide net load during the summer peak time  \footnote{Summer peak time refers to the period of July 7th, 2022 to July 21st, 2022 \cite{menati2022carbon}.}. This is in sharp contrast with most other loads, including conventional Internet data centers. Such negative correlation suggests a great potential for cryptomining facilities to offer aggresive ancillary services in electricity markets. 

\begin{figure}[!tbp]
    \centering
    \includegraphics[width =\columnwidth, trim={0 5em 0 0em}]{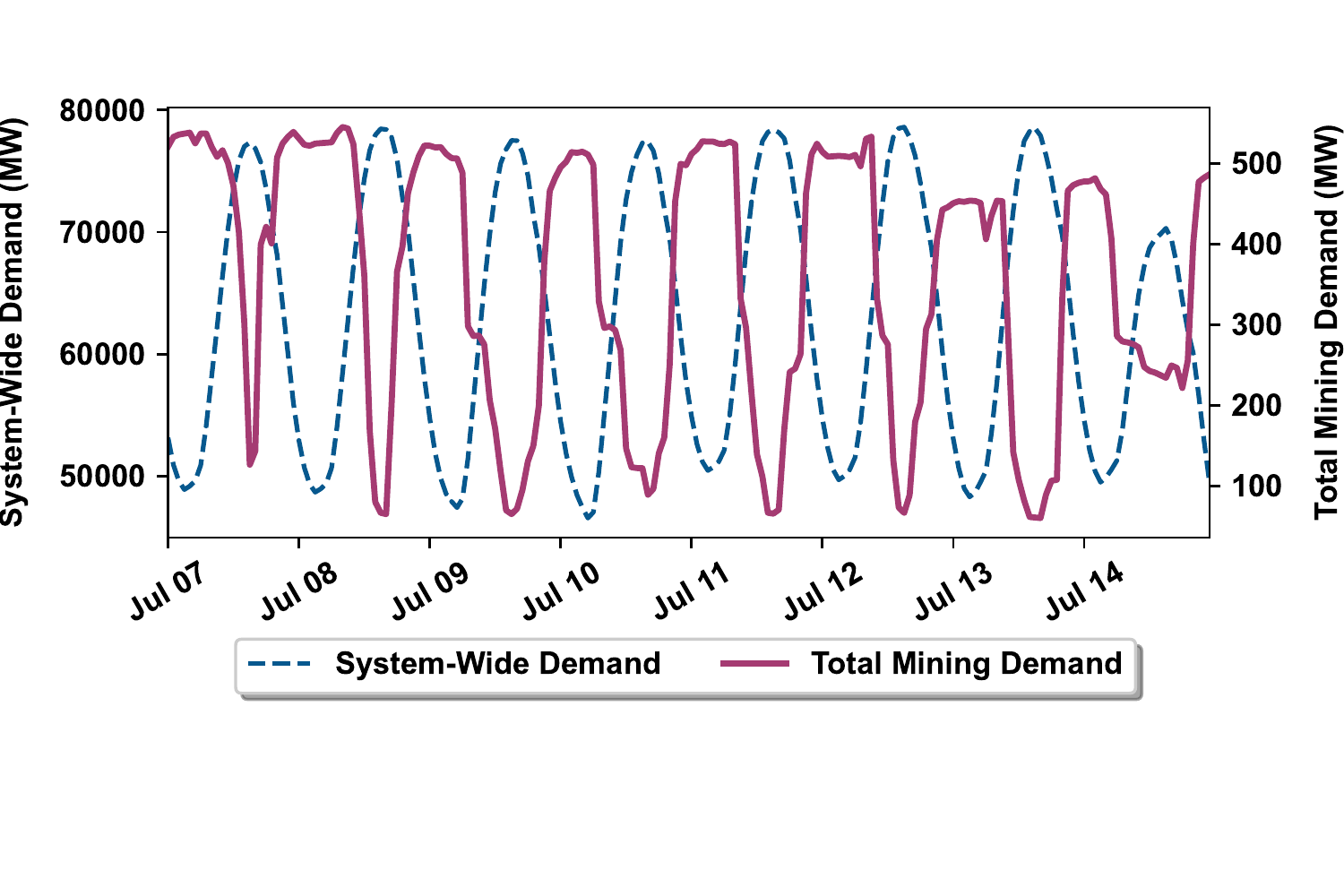}
    \caption{ERCOT system-wide demand vs. total cryptomining demand for a week in summer 2022.}
    \label{fig:carbon}
\end{figure}

Participating in ancillary services has not been wildly adopted by cryptomining facilities, given the sophisticated mechanisms and technology set up. However, such participation could be a substantial source of revenue for mining facilities. For example, during December 2022, a mining facility in Texas earned over \$4.9 million in demand response credits, constituting 30\% of its total revenue \cite{riot}. Many cryptominers have expressed strong interests in participating in ancillary services. At the same time, grid operators are also anticipating such participation, and have a strong need to better understand the behavior of cryptominers when they participate. In this work, we propose a general formulation for the cryptominers to optimize their operation with ancillary service participation, and  control their loss of mining revenue. We
summarize our main contributions as follows:

(1) Novel optimization problem formulations are introduced to allow cryptomining facilities to participate aggressively in the ancillary services market. The formulation is general for facilities with multiple devices and multiple ancillary service programs available. The setting of the optimization problem includes both online and offline versions. Two special cases are also considered: The first case has two dependent programs with joint deployment probability distribution; the second case is for a cryptomining facility with a risk-aware mechanism and single-type devices.

(2) Efficient algorithms are proposed to solve the optimization problems for cryptomining facility participation with potentially heterogeneous cryptomining devices. The proposed algorithms are guaranteed to converge to the optimal solutions in both the offline and online settings, in terms of expected cost and in terms of the regret, respectively. 

(3) Extensive numerical simulations are performed based on real-world ERCOT data for both the special cases discussed earlier and for the general setting. These simulations show that with the proposed algorithms, cryptomining facilities can sometimes expect improved profit margins of more than $20\%$ by strategically participating in the ancillary services. Moreover, simulation in the online setting shows that the proposed algorithms are effective in this setting by quickly approaching the best strategy in hindsight. 

The remainder of this paper is organized as follows. Related work is presented in Section \ref{sec:background}. In Section \ref{sec:problem}, a general formulation that allows cryptomining to participate in ancillary services is presented. Section \ref{sec:algorithm} provides the proposed algorithms. Section \ref{sec:dependent programs} investigates the optimal solution for dependent frequency regulation programs specifically. Section \ref{sec:risk_aware_formulation} studies the risk-aware setting. Section \ref{sec:online} considers the online version of the general problem. Numerical experiments based on real-world case studies are presented in Section \ref{sec:experiment}, and Section \ref{sec:conclusion} concludes the paper.
\section{Background and related work}\label{sec:background}
Over the past few years, with the substantial increase in the integration of renewables and electrification of the residential and transportation sections, uncertainty and intermittency have increased on both the load and generation sides. To address these challenges, various demand response and ancillary service programs have been extensively studied by researchers \cite{paterakis2017overview, hatami2010minimizing, jiang2011real}. However, considering the large number of customers and their uncertain behavior, optimizing the decision-making process for ancillary services is still a challenging task \cite{tsui2012demand, ye2016game, ibars2010distributed}.

Participating in ancillary services is a win-win strategy, which helps grid achieve better load balancing, and helps data centers reduce their energy bills \cite{aikema2012data}, and create a new source of revenue \cite{chen2013dynamic}. Data center participation in ancillary services is closely tied with their optimal workload management to minimize their operational costs \cite{Rao12, Ghamkhari13, Chen16}. In literature, data center workload scheduling is usually combined with local power generation \cite{liu2013data}, or cooling systems \cite{Cupelli18}. Some other works investigate data center demand response from the utilities' point of view by designing optimal pricing mechanisms to encourage data centers' participation in ancillary services \cite{Wang16, liu2011greening, Tran16, Bahrami19}. 

Frequency regulation is an integral part of ancillary services, which ensures uninterrupted supply-demand balance and avoids large frequency deviations. Traditional methods of providing frequency regulation, such as thermal fuel and hydro generators, can be expensive, and more importantly, they are rather slow \cite{hao2015potentials}. Recent studies have found that large flexible loads like cryptocurrency mining facilities can provide significantly faster response \cite{lancium}. While there have been some studies on data centers participation in frequency regulation, due to the limited flexibility of data center \cite{quirk2021cryptocurrency}, most of these studies require pairing the data center with battery storage systems \cite{shi2017using, aksanli2014providing, li2014integrated}.

There are some similarities between the problem of cryptocurrency mining facilities participating in demand response and that of data center participation \cite{wierman2014opportunities,
Cupelli18}. However, the inherent structure of the two problems is different due to the fundamental differences in their physical and economic characteristics. First, modeling data center operation usually requires considering a start-up cost associated with the damage caused by turning devices on and off. However, in mining facilities, this is usually negligible, since the mining device are taken out of operation, and frequently being updated with newer, more efficient machines. For example, between August 2016 and July 2022, the average estimated energy intensity of mining machines decreased by around 85\%, due to technological improvements and the older machines being replaced with the newer ones \cite{WhiteHouse_crypto_report}. 
Second, the data center's demand flexibility is highly dependent on its processing requests and changes over time \cite{liu2013data}. However, mining facility is considered as \emph{complete} flexible demand capable of participation in demand response with full capacity \cite{menati2022modeling}. Third, while data centers usually use homogeneous servers with similar efficiency, in mining facilities we often observe multiple types of mining devices, each with a different mining capability. Later in Section \ref{sec:problem}, we will see the impact of heterogeneous devices on the problem formulation and the optimal choice of ancillary service profile. To the best of our knowledge, this is the first work to consider the optimization of cryptocurrency miners' participation in ancillary service markets.

\section{Ancillary Service Participation Problem}
\label{sec:problem}

The objective of the ancillary service profile selection by cryptocurrency mining facilities is to utilize the demand flexibility of the cryptomining devices to maximize cryptominer's profit. At the same time, this active participation will help the grid achieve better supply-demand balance. We consider a system that operates in a time-slotted fashion, where $\mathcal{T}:= \{1, ..., T\}$ denotes the set of all time slots, and $T$ denotes the total length of the time horizon ($T\triangleq |\mathcal{T}|$). The key notations are presented in Table~\ref{tbl:not}.

\subsection{System Model}

Let us consider a cryptomining facility with $K$ different types of cryptomining devices, where $\mathcal{K}$ represents the set of all possible cryptomining devices. We denote $c_k^M$ as the total capacity of type $k$ machines, where the aggregated capacity of the cryptomining facilities is $C^M=\sum_{k=1}^{K} c_k^M$. While cryptocurrency mining is the main source of revenue for the facilities, they also have the capability to participate in different ancillary service programs. We consider $\mathcal{N}:=\{1, ..., N\}$ as the set of all available ancillary service programs that the cryptomining facility is qualified to participate in, and $c_i(t)$ as the share of the facility capacity committed to program $i$ at time $t$, where $i \in \mathcal{N}$. Here $c_i(t) \leq \sum_{k=1}^{K} c_k^M=C^M$, and $c(t) =\big[c_i(t)\big]_{i=1}^{N}$ is an $N$ dimensional vector denoting the ancillary service profile at time $t$. Depending on time, necessity, and physical characteristics of ancillary service programs, cryptominers receive different revenues for participating in each program. Let us denote $p_i(t)$ as the expected per-unit price of the $i$th ancillary service program at time $t$, and $\epsilon_i(t)$ as the percentage of the committed capacity $c_i(t)$ that is being deployed at time $t$, where $\epsilon_i(t)c_i(t)$ capture the amount of deployed capacity under ancillary service program $i$. Here, $\epsilon(t) =\big[\epsilon_i(t)\big]_{i=1}^{N}$ denotes the deployment rate of all ancillary service programs at time $t$, where $\epsilon_i(t)$ is a random variable between zero and one that is highly dependent on the type of program. 

\begin{table}[!t]
	\begin{center}
		\begin{tabular}{|c|c|p{6cm}|}
			\hline
			\multicolumn{2}{|c|}{\textbf{Notation}} & \textbf{Definition} \\
			\hline \hline
			\multirow{6}{*}{\rotatebox[origin=c]{90}{\hspace{0mm} \textbf{Programs} }} 	
			&$p_i(t)$  &  Per-unit price of the $i$th ancillary service program at time $t$ (\$/MWh)\tabularnewline
			&$c_{i}(t)$  & Committed capacity to program i at time $t$ (MW)\tabularnewline
			&$\epsilon_{i}(t)$  & Deployment rate of the $i$th program at time $t$ \tabularnewline
			&$\mathcal{N}$  & The set of possible ancillary service programs\tabularnewline 
            &$\mathcal{T}$  & The set of time slots ($T\triangleq |\mathcal{T}|$)\tabularnewline \hline
			\multirow{7}{*}{\rotatebox[origin=c]{90}{ \hspace{-30mm} \textbf{Cryptominer}}}
            &$\mathcal{K}$ & The set of all possible types of cryptomining devices
			\tabularnewline
			&$c_k^M$  & Maximum capacity of the  $k$th type of cryptomining devices (MW)
            \tabularnewline
			&$C^M$  & Total aggregated capacity of all cryptomining devices (MW)\tabularnewline
			&$r_{k}(t)$ & The per unit net reward of mining cryptocurrency with the $k$th type of machines at time $t$ for one unit of electricity (\$/MWh) \tabularnewline
            &$d_{k}(t)$ & The deployed capacity covered using the $k$th type of machines at time $t$ (MW) \tabularnewline
            &$p_b^k(t)$ & The revenue of the $k$th type of machines obtained from selling the cryptocurrency mined using one unit of electricity at time $t$ (\$/MWh) \tabularnewline
            &$p_e(t)$ & The per-unit electricity cost (\$/MWh) \tabularnewline
            \hline
		\end{tabular} 
	\end{center}
	\caption{Key Notations.}
	\label{tbl:not}
	\vspace{-5mm}
\end{table}

We define $r_k(t)$ as the per unit net reward of mining cryptocurrency with type $k$ machines at time $t$ for one unit of electricity, and it is calculated as $r_k(t)=p_b^k(t)-p_e(t)$. $p_b^k(t)$ is the revenue of the $k$th type of machines obtained from selling the cryptocurrency mined using one unit of electricity, and $p_e(t)$ is the per-unit electricity cost. We define $p_b^k(t)$ as follows: 
\begin{equation}
p_b^k(t)=\frac{\textit{Crptocurrency price}  \,(\$/crypto)}{\textit{Energy intensity of type k machines} \,(MWh/crypto)}, \notag
\end{equation}
which depends on the cryptocurrency unit price and the energy intensity of the cryptomining machines. Energy intensity captures the amount of energy needed to mine one unit of cryptocurrency. For example, consider a Bitcoin mining facility, which needs $100MWh$ of energy for its $k$th type of machine to mine one bitcoin. If at time $t$, the Bitcoin price is $\$20,000$, we calculate $p_b^k(t)$ as $p_b^k(t)=\frac{20,000 \, \$/BTC}{100 \,MWh/BTC} \approx 200 \$/MWh$. 

If at time $t$, the cryptominer is participating in program $i$ with capacity $c_i(t)$, then its total committed capacity is $\sum_{i=1}^{N} c_i(t)$, and its total deployed capacity is $\sum_{i=1}^{N} \epsilon_i(t) c_i(t)$. 
The cryptominer needs to reduce its consumption and cover this deployed capacity by turning off a combination of its cryptomining devices. Let us denote $d_k(t)$ as the deployed capacity covered using the $k$th type of machine. Note that the $k$th type of machines are assumed to always mine with full capacity $c_k^M$, unless deployed by the cryptominer to reduce their consumption by $d_k(t)$ and operate at the new level of $c_k^M- d_k(t)$. In this paper, we assume that at each time slot $t$, $r_k(t) \geq 0$ for all $k \in \mathcal{K}$, which means that the reward of mining cryptocurrency is positive, and the cryptomining facility keeps mining, with full capacity if it is not deployed by ancillary services. We also assume that $p_i(t) \geq 0$ for all $i \in \mathcal{N}$, which always holds because the price of ancillary services is always positive. In the following section, we first determine the optimal deployment strategy for different types of cryptomining devices and then find the optimal ancillary service profile for the cryptominer. 
\subsection{Problem Formulation}
We consider a setting where the cryptominer chooses an ancillary service profile $c(t)$ at each time slot. After submitting its decision,  the cryptominer receives a signal from the grid operator to deploy a certain fraction ($\epsilon(t)$) of its committed capacity. Then, the cryptominer decides which types of cryptomining devices to shut down in order to fulfill its deployed capacity. Intuitively, it should start by turning off the more energy-intensive devices and keeping the more efficient ones running and mining. We formulate this decision-making process as follows:
\begin{subequations} \label{op:deployment}
	\begin{eqnarray}
	&{\min}& \sum_{t=1}^{T} \big[ \sum_{k=1}^{K} r_k(t) d_k(t) - \sum_{i=1}^{N}c_i
(t)p_i(t) \big]\\
	&\mbox{s.t.}\;&  \sum_{k=1}^{K} d_k(t) = \sum_{i=1}^{N} \epsilon_i(t) c_i(t),\quad \quad  \, \, \, \quad \textit{for} \, \, \, \,  t \in \mathcal{T} \label{op:deployment-constraint1}\\
&\mbox{vars.}\;& d_k(t) \in [0, c_k^M],   \quad \quad  \textit{for} \, \, \,   t \in \mathcal{T},  \, \, \,   and\, \, \, k \in \mathcal{K}, \label{op:deployment-constraint2}
 \end{eqnarray} 
\end{subequations}
where $r_k(t) d_k(t) $ captures the loss of revenue induced by stopping mining cryptocurrency due to the deployment of the $k$th cryptomining devices, and $c_i(t) p_i(t)$ is the revenue of participating in program $i$. Here  $\sum_{i=1}^{N} \epsilon_i(t)c_i(t)$ is the total capacity deployed under all ancillary service programs, and constraint \eqref{op:deployment-constraint1} ensures that this is covered by a combination of different cryptomining devices. Constraint \eqref{op:deployment-constraint2}  ensures that the deployed capacity covered by each type of cryptomining device is not more than its available capacity. 

It should be noted that all operational ASIC miners are fast-responding machines, capable of increasing or decreasing their power consumption within seconds \cite{lancium}. Hence, we do not consider any ramping constraint for these machines, and their deployment at each time slot is independent of other time slots. In other words, we can solve the optimization problem in~\eqref{op:deployment} separately for each time slot $t \in \mathcal{T}$. In the rest of this paper, we drop the time index $t$ for all the parameters and focus on the optimization over one time slot as follows:
\begin{subequations} \label{op:deploy}
	\begin{eqnarray}
	&{\min}&  \sum_{k=1}^{K} r_k d_k - \sum_{i=1}^{N}c_i
p_i\\
	&\mbox{s.t.}\;&  \sum_{k=1}^{K} d_k = \sum_{i=1}^{N} \epsilon_i c_i, \\
&\mbox{vars.}\;& d_k \in [0, c_k^M],   \quad \quad \quad \textit{for} \quad  k \in \mathcal{K},
 \end{eqnarray} 
\end{subequations}
In this optimization problem, the cost function is linear in terms of the optimization variable $d_k$. The optimal decision is different depending on the value of $r_k$. And the larger $r_k$, the smaller $d_k$ should be. This optimal solution implies that cryptomining devices with larger per-unit reward should be deployed the least frequently. With this intuition, in order to distribute the total deployed capacity $\sum_{i=1}^{N} \epsilon_i c_i $ among all types of machines, we start with the least efficient machines and only deploy the next efficient machines when the previous ones are completely deployed. Let us assume that the $K$ types of machines are sorted based on their per-unit reward in an ascending order $r_1 < r_2 < ...< r_K$. It should be noted that this order will not change through time, and the ascending order of cryptomining devices could be easily calculated and fixed. The following theorem determines the optimal choice of $d_k$ for all types of machines.

\begin{thm} For a cryptominer with the total deployed capacity of $\sum_{i=1}^{N} \epsilon_i c_i$, and $K$ types of cryptomining devices with $r_1 < r_2 < ...< r_K$, the optimal deployment by each type is:
\begin{equation} \label{deployed-calc}
		d_k \mbox{=} \begin{cases}
		 \min \{\sum_{i=1}^{N} \epsilon_i c_i, \,  c_1^M \}	\, & \mathrm{if } \quad  k = 1,\\
  \min \{\sum_{i=1}^{N} \epsilon_i c_i - \sum_{j=1}^{k-1} d_j, \,  c_k^M \}	\, & \mathrm{if } \quad  k \in [2:K].
		\end{cases} 
\end{equation} 
\label{thm:deployment}  
 \end{thm}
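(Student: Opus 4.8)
The plan is to reduce \eqref{op:deploy} to a small linear program and then solve it by a prefix-sum (Abel summation) argument. Since the term $-\sum_{i=1}^{N} c_i p_i$ does not involve the decision variables, the problem is equivalent to minimizing the linear functional $\sum_{k=1}^{K} r_k d_k$ over the polytope $P = \{\, d \in \mathbb{R}^K : \sum_{k=1}^{K} d_k = D,\ 0 \le d_k \le c_k^M \,\}$, where I abbreviate $D := \sum_{i=1}^{N} \epsilon_i c_i$. I will take for granted, as the model implicitly requires (one cannot deploy more than the installed capacity), that $D \le C^M = \sum_{k=1}^{K} c_k^M$, so that $P$ is nonempty and compact and a minimizer exists. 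A short induction then confirms that the candidate $d^\star$ defined by \eqref{deployed-calc} lies in $P$: the residual $D - \sum_{j<k} d_j^\star$ is always nonnegative (once it hits $0$ it stays $0$), every coordinate is clipped into $[0, c_k^M]$, and the coordinates sum to $\min\{D, C^M\} = D$.

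For optimality I would use the identity obtained by writing $r_k = r_1 + \sum_{j=2}^{k}(r_j - r_{j-1})$ and letting $P_j(d) := \sum_{k=1}^{j} d_k$:
\[
\sum_{k=1}^{K} r_k d_k \;=\; r_1 D \;+\; \sum_{j=2}^{K} (r_j - r_{j-1})\bigl(D - P_{j-1}(d)\bigr).
\]
Because $r_1 < r_2 < \dots < r_K$, each coefficient $r_j - r_{j-1}$ is strictly positive, so the objective is strictly decreasing in every prefix sum $P_1(d), \dots, P_{K-1}(d)$. For any feasible $d$ one has $P_j(d) \le \sum_{k \le j} c_k^M$ from the box constraints and $P_j(d) = D - \sum_{k>j} d_k \le D$ from nonnegativity and the equality constraint, hence $P_j(d) \le M_j := \min\{D, \sum_{k \le j} c_k^M\}$; and the greedy clipping in \eqref{deployed-calc} is precisely what makes $P_j(d^\star) = M_j$ for every $j$. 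So $d^\star$ simultaneously maximizes all prefix sums and therefore minimizes the objective, and since the coefficients are strictly positive, equality forces $P_j(d) = M_j$ for all $j$, i.e. $d = d^\star$; thus \eqref{deployed-calc} is the unique optimum.

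I do not expect a real obstacle here — the only step needing care is checking the two bounds $P_j(d) \le \sum_{k\le j} c_k^M$ and $P_j(d) \le D$ together with the claim that $d^\star$ meets both with equality at every index (including the degenerate indices of \eqref{deployed-calc} where the residual has already dropped to $0$). A fully equivalent route, if preferred, is a direct exchange argument: let $k$ be the smallest index at which an arbitrary feasible $d$ disagrees with $d^\star$; the sum and box constraints force $d_k < d_k^\star$, hence some later index $l > k$ has $d_l > d_l^\star$, and moving mass $\delta = \min\{d_k^\star - d_k,\, d_l - d_l^\star\}$ from $l$ to $k$ keeps feasibility while changing the objective by $(r_k - r_l)\delta < 0$, so no $d \neq d^\star$ can be optimal.
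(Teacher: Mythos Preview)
Your argument is correct, and it is more than the paper actually does. The paper's own proof is a two-sentence greedy sweep: it observes that the objective is linear in $d_1$ with the smallest coefficient $r_1$, so $d_1$ should be pushed to its upper bound $\min\{\sum_i \epsilon_i c_i,\, c_1^M\}$; it then fixes $d_1$ and repeats the reasoning for $d_2$, and so on. That is intuitively right but leaves implicit the key point---why fixing $d_1$ at its maximum cannot foreclose a better global choice later.

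Your Abel-summation rewrite closes exactly that gap: by expressing $\sum_k r_k d_k$ as $r_1 D + \sum_{j\ge 2}(r_j-r_{j-1})(D-P_{j-1}(d))$ you reduce optimality to the simultaneous maximization of all prefix sums $P_j(d)$, and the two bounds $P_j(d)\le D$ and $P_j(d)\le \sum_{k\le j}c_k^M$ make it transparent that the greedy point $d^\star$ is feasible and saturates every $P_j$. This also yields uniqueness for free (strict inequalities $r_j>r_{j-1}$ force any optimizer to match every prefix sum of $d^\star$), which the paper does not state. Your alternative exchange argument is the standard adjacent-swap proof for this LP and is likewise fine. In short: same conclusion, but your route is a genuine strengthening of the paper's informal inductive sketch.
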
 
\begin{proof}
Let us start with the optimal deployment of the first type of machine. Given that $r_1$ is the minimum reward among all machines and the linearity of the cost function with respect to $d_1$, we want to maximize $d_1$. As it can be seen from the optimization problem, we have $d_1 \leq \min \{\sum_{i=1}^{N} \epsilon_i c_i, \,  c_1^M \}$, and we can maximize $d_1$ by choosing this upper bound. Now that $d_1$ is fixed, we move on to $d_2$, where its upper bound is $d_2 \leq \min \{\sum_{i=1}^{N} \epsilon_i c_i - d_1, \,  c_2^M \}$, by repeating this process and finding the optimal $d_k$ in ascending order, we obtain the optimal strategy presented in Theorem \ref{thm:deployment}. 
\end{proof}

 This theorem implies that the cryptominer could potentially participate in ancillary service with the available capacity of all types of machines, but only use the less efficient devices to reduce its load when being deployed by certain ancillary service programs. Intuitively, this means that cryptominers are financially incentivized to keep their newer machines in use and help the grid achieve better load balancing by turning off their older ones. 
 

\subsection{Optimal Ancillary Service Profile}
In the previous section, we determined the optimal deployment of different types of cryptomining devices given the total deployed capacity of ancillary service programs. In this section, the results of theorem \ref{thm:deployment} are used to find the optimal ancillary service profile. We present the following definition to capture the types of machines needed to cover the deployed capacity.

\begin{defn} 
For a given ancillary service profile $c \in \mathbb{R}^N $, and deployment ratio $\epsilon \in \mathbb{R}^N $, we define the \emph{"critical type"} denoted by $k_c$ as follows:
\begin{equation}
\label{kprime}
k_c= \arg\min\limits_{{ q \in \mathcal{K}  }} q, \quad s.t. \quad \sum_{i=1}^{N} \epsilon_i c_i \leq \sum_{k=1}^{q} c_k^M.
\end{equation} 
\end{defn}
Here $k_c$ captures the type of cryptomining devices with partially deployed capacity. All types of cryptomining devices less efficient than $k_c$ are fully deployed, and the more efficient ones are not deployed at all. Using the results of Theorem \ref{thm:deployment}, for any $k \in \mathcal{K}$ we have:
\begin{equation} 
		d_k \mbox{=} \begin{cases}
		c_k^M	\, & \mathrm{if } \quad  k < k_c,\\
  \sum_{i=1}^{N} \epsilon_i c_i - \sum_{j < k_c } c_j^M	\, & \mathrm{if } \quad  k = k_c,\\
			0 &  \mathrm{if } \quad  k>k_c ,
		\end{cases} 
\end{equation} 
By substituting this optimal deployment strategy in the cost function of the optimization problem \eqref{op:deploy}, we have: 
\begin{align} \label{cost}
cost(\epsilon, c)=&\sum_{k <k_c} r_k c_k^M + r_{k_c} \big( \sum_{i=1}^{N} \epsilon_i c_i-\sum_{k<k_c}c_k^M \big)- \sum_{i=1}^{N}c_ip_i \nonumber\\ 
   =& \sum_{k <k
_c} (r_k-r_{k_c}) c_k^M + \sum_{i=1}^{N} c_i( r_{k_c}\epsilon_i - p_i ).
 \end{align} 
This is a cost function that does not depend on $d_k$ anymore and is a function of the ancillary service profile $c$ and the deployment ratio $\epsilon$. Now, we finally obtain the profit maximization problem for cryptominers participating in ancillary services as follows:
\begin{subequations} \label{op:multi}
\begin{eqnarray}
	&{\min}& \EX_{_{\mathcal{\epsilon}}}[cost(\epsilon, c)]\\
	&\mbox{s.t.}\;& \sum_{i=1}^{N}c_i \leq C^M,  \label{feasible}\\ 
          &\mbox{vars.}\;& c_i \geq 0,  \quad  \quad \quad \quad \textit{for} \quad i \in \mathcal{N},
 \end{eqnarray} 
\end{subequations}
where the decision variables are $[c_i]_{i=1}^{N}$, and we are maximizing the expected profit with respect to $\epsilon$. Here, the constraint \eqref{feasible} ensures that the total committed capacity does not exceed the total combined capacity of all types of cryptomining devices.

\section{Algorithm Design}\label{sec:algorithm}

To solve the optimization problem in \eqref{op:multi}, first, we need to understand how the objective function depends on the decision variables. The cost function is the expected value of cost with respect to the random variable $\epsilon\in \mathbb{R}^N$. Here, we first consider the properties of the function $cost(\epsilon, c)$ for a fixed $\epsilon$, then we use those properties to solve the expected cost. Consider the feasible region for this optimization problem defined as follows:
\begin{equation}
  \mathcal{F}_c \coloneqq \{c \in \mathbb{R}^N | \quad c \succeq 0, \quad \sum_{i=1}^{N} c_i < C^M\}. \label{feasible-region}
\end{equation}

For any fixed $\epsilon$, there exist $K$ hyperplanes, each associated with a different $k_c \in [1:K]$ defined as follows:
\begin{equation}
\sum_{i=1}^{N} \epsilon_i c_i = \sum_{k=1}^{k_c} c_k^M, \quad for \quad k_c \in [1:K]
\end{equation}

These hyperplanes divide the feasible region $\mathcal{F}_c$ into at most $K$ regions. Depending on $c$, the value of $\sum_{i=1}^{N} \epsilon_i c_i$ falls within one of these regions, and all the points $c$ that fall in that region have the same critical type $k_c$. From an operational perspective, if $\sum_{i=1}^{N} \epsilon_i c_i$ falls within regions $k_c$, it means that all types of machines less efficient than $k_c$ are fully deployed and type $k_c$ machines are partially deployed. For a cryptominer with two types of cryptomining devices and two possible ancillary service programs, these regions are shown in Fig.~\ref{fig:regions}. In this figure, for a fixed $\epsilon$ the feasible region is divided into two parts, where in the orange one ($k_c=1$), only type 1 machines are needed to cover the deployed capacity. However, in the pink region ($k_c=2$), both type 1 and type 2 machines are deployed. It can be seen that for each region, the cost function is linear in terms of $c$ and for the whole feasible region is piecewise linear and convex. The following Lemma extends this observation to the general setting.
\begin{figure}
\centering
\includegraphics[width=0.6\columnwidth]{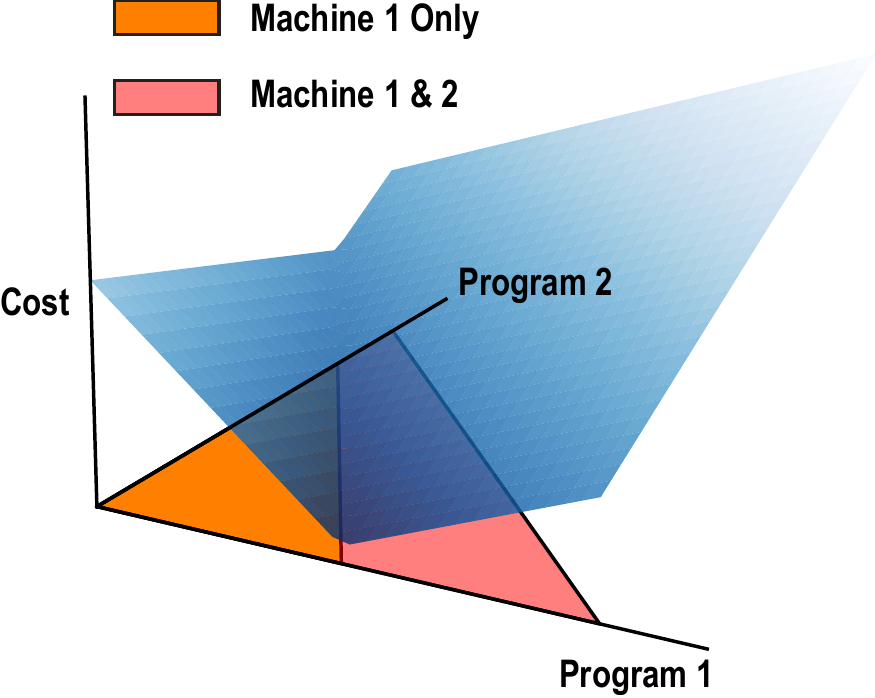}
\caption{\label{fig:regions}The cost function for participating in two ancillary service programs with capacity $c_1$ and $c_2$. For any given $\epsilon$, there are two regions. In region 1, only machine 1 is used, but in region 2, both types of machines are used to cover the deployment. It can be seen that the cost function is piecewise linear and convex.}
\end{figure}

\begin{lem}
\label{lem:convex}
For a fixed $\epsilon \in \mathbb{R}^N$, the cost function $cost(\epsilon,c)$ given in \eqref{cost} is piecewise affine and convex with respect to $c$.
\end{lem}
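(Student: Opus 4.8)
The plan is to collapse the $N$-variable function $cost(\epsilon,\cdot)$ onto a single scalar argument and then exploit the strict ordering $r_1 < r_2 < \dots < r_K$. First I would recall that $cost(\epsilon,c)$ was \emph{defined} in \eqref{cost} by plugging the greedy-optimal deployment of Theorem~\ref{thm:deployment} into the objective of \eqref{op:deploy}. Hence
\[
cost(\epsilon, c) \;=\; h\!\Big( \textstyle\sum_{i=1}^N \epsilon_i c_i \Big) \;-\; \sum_{i=1}^N p_i c_i,
\qquad
h(s) \coloneqq \min\Big\{ \textstyle\sum_{k=1}^K r_k d_k \;\Big|\; \textstyle\sum_{k=1}^K d_k = s,\ 0 \le d_k \le c_k^M \Big\},
\]
where $h$ is defined for $s \in [0, C^M]$; since $\epsilon_i \in [0,1]$ and $c \in \mathcal{F}_c$, the argument $\sum_i \epsilon_i c_i$ always lies in $[0, C^M]$, so this is well-posed (this is also why the critical type $k_c$ in \eqref{kprime} is well-defined). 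Thus $h$ is the optimal value of a linear program as a function of the right-hand side of its equality constraint, which is the quantity we must understand.

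Second I would pin down $h$ explicitly. Set $S_0 \coloneqq 0$ and $S_m \coloneqq \sum_{k=1}^m c_k^M$ for $m \in [1:K]$, so $S_K = C^M$. The greedy rule of Theorem~\ref{thm:deployment} (exhaust the least efficient machines first) gives, for $s \in [S_{m-1}, S_m]$,
\[
h(s) \;=\; \sum_{k=1}^{m-1} r_k c_k^M \;+\; r_m\,(s - S_{m-1}).
\]
This is continuous across the breakpoints $s = S_m$ (both adjacent pieces evaluate to $\sum_{k=1}^m r_k c_k^M$ there), piecewise linear, and its slope on the $m$-th piece is $r_m$. Because $r_1 < r_2 < \dots < r_K$, the slopes are strictly increasing, so $h$ is a continuous piecewise-linear function of one variable with nondecreasing slope, hence convex (equivalently, $h$ is the pointwise maximum of the affine extensions of its pieces, which is the LP-value-function fact specialized to this case).

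Third I would assemble the conclusion. The map $c \mapsto \sum_{i=1}^N \epsilon_i c_i$ is linear, and a convex function precomposed with an affine map is convex; subtracting the linear term $\sum_i p_i c_i$ preserves convexity, so $cost(\epsilon,\cdot)$ is convex on $\mathcal{F}_c$. For the piecewise-affine claim, the preimages of the breakpoints $S_m$ under this linear map are exactly the hyperplanes $\sum_i \epsilon_i c_i = S_m$ already identified before the Lemma; they partition $\mathcal{F}_c$ into at most $K$ polyhedral cells, and on the cell with critical type $k_c = m$ the formula \eqref{cost} reduces to the single affine piece $\sum_{k<m}(r_k - r_m)c_k^M + \sum_i c_i(r_m \epsilon_i - p_i)$, so $cost(\epsilon,\cdot)$ is affine there.

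The only real subtlety is that crossing a hyperplane changes \emph{both} the constant term and all the linear coefficients of \eqref{cost} at once, so convexity is not visible by inspecting the pieces individually; the one-dimensional reduction to $h$ — or, equivalently, invoking the standard fact that the optimal value of an LP is convex in its right-hand side — is what makes the argument transparent. Once that reduction is in place, the entire content of the proof is the observation that the slopes of $h$ are $r_1 < r_2 < \dots < r_K$, i.e., the Lemma is essentially a restatement of the hypothesis together with Theorem~\ref{thm:deployment}.
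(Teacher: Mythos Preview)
Your proof is correct and takes a genuinely different route from the paper. The paper argues convexity by introducing an auxiliary family of affine functions $cost(\epsilon,c,k')$, one for each $k'\in[1:K]$, and then computes the telescoping difference
\[
cost(\epsilon,c,k'+1)-cost(\epsilon,c,k') \;=\; (r_{k'+1}-r_{k'})\Big[\textstyle\sum_i c_i\epsilon_i - \sum_{k\le k'} c_k^M\Big]
\]
to show directly that the maximum over $k'$ is attained at $k'=k_c$, so $cost(\epsilon,c)$ is a pointwise maximum of $K$ affine functions in $c$. You instead factor the cost through the scalar $s=\sum_i\epsilon_i c_i$, write $cost(\epsilon,c)=h(s)-\sum_i p_i c_i$, and observe that $h$ is a one-variable piecewise-linear function with strictly increasing slopes $r_1<\dots<r_K$, hence convex; convexity of $cost(\epsilon,\cdot)$ then follows from affine precomposition. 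The two arguments are equivalent at the core (your $h$ being the max of the affine extensions of its pieces is exactly the paper's max-over-$k'$ identity), but your reduction to one dimension makes the role of the ordering hypothesis and of Theorem~\ref{thm:deployment} more transparent and avoids the somewhat ad hoc telescoping computation; the paper's version has the minor advantage of exhibiting the $K$ supporting affine functions explicitly in the original variables, which is convenient for the subgradient formula used later.
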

\begin{proof}

\textbf{Piecewise Affine:} Within each region, the critical type $k_c$ is fixed, hence the first part of the cost function $\sum_{k <k_c} (r_k-r_{k_c}) c_k^M$ is a constant, and the second part $\sum_{i=1}^{N} c_i( r_{k_c}\epsilon_i - p_i)$ is linear in terms of $c$. This makes the cost function affine within each region. 

\textbf{Convex:} The critical type $k_c$ depends on the value of $c$. Let us replace $k_c$ in the cost function with another variable $k' \in [1:K]$ that is independent of $c$. This creates a new function denoted as $cost(\epsilon, c, k')$, which depends on the value of $k'$ as follows: 

\begin{eqnarray}
    cost(\epsilon, c, k') = \sum_{k <k'} (r_k-r_{k'}) c_k^M + \sum_{i=1}^{N} c_i( r_{k'}\epsilon_i - p_i ) 
\end{eqnarray}

For any fixed $k' \in [1:K]$, this cost function is affine in terms of $c$. Hence there exist a total of $K$ affine functions, each defined by a different $k'$. Our goal is to show that for any fixed $\epsilon$ the cost function $cost(\epsilon, c)$ is the maximum of these $K$ affine functions, therefore it is a convex function \cite{cvxoptimization}:
\begin{eqnarray}
    cost(\epsilon, c) = \underset{k' \in [1:K]}{\max} cost(\epsilon, c, k')
\end{eqnarray}

To find the maximum of the function $cost(\epsilon, c, k')$, let us consider its value changes when we increase $k'$ by one unit. Take two cost functions,  $cost(\epsilon, c, k')$ and $cost(\epsilon, c, k'+1)$, respectively. We compare the values of these two functions as follows:

\begin{eqnarray}
 && cost(\epsilon, c, k'+1) - cost(\epsilon, c, k') = \sum_{k <k'+1} (r_k-r_{k'+1}) c_k^M \notag\\
 && - \sum_{ k <k'} (r_k-r_{k'}) c_k^M  + \sum_{i=1}^{N} c_i\epsilon_i (r_{k'+1}-r_{k'})\notag \\
 && = \sum_{k <k'+1} (r_{k'}-r_{k'+1}) c_k^M+\sum_{i=1}^{N} c_i\epsilon_i (r_{k'+1}-r_{k'})\notag \\
 && = (r_{k'+1}-r_{k'}) [\sum_{i=1}^{N} c_i\epsilon_i -\sum_{k <k'+1}c_k^M ] 
 \end{eqnarray} 
The term $(r_{k'+1}-r_{k'}) $ is always positive, hence if $\sum_{i=1}^{N} c_i\epsilon_i > \sum_{k <k'+1}c_k^M $, the value of the function increases by going from $k'$ to $k'+1$. On the other hand, if $\sum_{i=1}^{N} c_i\epsilon_i \leq \sum_{k <k'+1}c_k^M $, the value decreases by increasing $k'$.Therefore to find the $k'$ that maximizes the function $cost(\epsilon, c, k'+1)$, we need to keep increasing $k'$ until $\sum_{i=1}^{N} c_i\epsilon_i \leq \sum_{k <k'+1}c_k^M $. This is the same as choosing $k'=k_c$, and $k'=k_c$ is the maximizing value over all possible values of $k'$. Hence, the cost function defined using $k_c$ is the pointwise maximum of $K$ affine functions, and it is convex. This completes the proof.
\end{proof}
Now that we understand the properties of the cost function for a fixed $\epsilon$, we present the following theorem regarding the properties of the expected cost. 

\begin{thm} The expected cost function $\EX_{_{\mathcal{\epsilon}}}[cost(\epsilon, c)]$ is a convex function in terms of the decision variable $c$, and the solution found by the stochastic subgradient descent algorithm converges to the optimal solution in expectation. 
\label{thm:convexity}  
 \end{thm}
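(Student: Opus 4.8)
The plan is to establish the two assertions in turn: convexity of the expected cost, then convergence of projected stochastic subgradient descent. \textbf{Convexity} is immediate from Lemma~\ref{lem:convex}. For every fixed $\epsilon$ the map $c \mapsto cost(\epsilon,c)$ is convex on the convex feasible set $\mathcal{F}_c$; since $\epsilon$ ranges over $[0,1]^N$ and $c$ over the bounded set $\mathcal{F}_c$, the integrand is uniformly bounded, so $f(c) := \EX_\epsilon[cost(\epsilon,c)]$ is finite everywhere. Applying the pointwise inequality $cost(\epsilon,\lambda c + (1-\lambda)c') \le \lambda\,cost(\epsilon,c) + (1-\lambda)\,cost(\epsilon,c')$ and taking expectations shows $f$ is convex (expectation preserves convexity). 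Continuity of $f$ together with compactness of the closure $\overline{\mathcal{F}_c} = \{c \succeq 0 : \sum_i c_i \le C^M\}$ guarantees a minimizer $c^\star$ exists, and $\inf_{\mathcal{F}_c} f = \min_{\overline{\mathcal{F}_c}} f$.

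For the \textbf{algorithmic convergence}, I would run projected stochastic subgradient descent on $\overline{\mathcal{F}_c}$: maintain $c^{(t)}$, at step $t$ draw a fresh $\epsilon^{(t)}$, compute any $g^{(t)} \in \partial_c\, cost(\epsilon^{(t)}, c^{(t)})$, and set $c^{(t+1)} = \Pi_{\overline{\mathcal{F}_c}}(c^{(t)} - \eta_t g^{(t)})$ with diminishing step size, e.g.\ $\eta_t = \Theta(1/\sqrt{t})$. Two ingredients make the standard analysis go through. First, \emph{unbiasedness}: from the proof of Lemma~\ref{lem:convex}, on the region with critical type $k_c$ the function $cost(\epsilon,\cdot)$ is affine with gradient $\big[r_{k_c}\epsilon_i - p_i\big]_{i=1}^N$, and on region boundaries a subgradient is any convex combination of the neighbouring gradients; in all cases $\EX_\epsilon[g^{(t)} \mid c^{(t)}] \in \partial f(c^{(t)})$, by the interchange of expectation and subdifferential for convex integrands. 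Second, \emph{boundedness}: since $\epsilon_i \in [0,1]$, every such $g$ satisfies $\|g\|_2 \le G$ for $G := \sqrt{N}\,(\max_k r_k + \max_i p_i)$, which also makes $f$ $G$-Lipschitz.

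With these two facts and $D := \mathrm{diam}(\overline{\mathcal{F}_c})$, the textbook bound for the step-weighted running average $\bar c^{(T)} = \big(\sum_{t=1}^T \eta_t\big)^{-1}\sum_{t=1}^T \eta_t\, c^{(t)}$ applies verbatim:
\[
\EX\big[f(\bar c^{(T)})\big] - f(c^\star) \;\le\; \frac{D^2 + G^2 \sum_{t=1}^T \eta_t^2}{2\sum_{t=1}^T \eta_t},
\]
which tends to $0$ as $T\to\infty$ for $\eta_t = \Theta(1/\sqrt{t})$ (and is $O(1/\sqrt{T})$ with the standard horizon-dependent choice $\eta_t = D/(G\sqrt{T})$). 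This is exactly the statement that the algorithm converges to the optimum in expectation.

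The \textbf{main obstacle} is the unbiasedness claim $\EX_\epsilon[\,\partial_c\, cost(\epsilon,c)\,] \subseteq \partial_c f(c)$: one must check that a subgradient of the realized cost, averaged over $\epsilon$, is genuinely a subgradient of $f$. This requires handling the measure-structure on the region-boundary set (where $\partial_c\, cost(\epsilon,\cdot)$ is set-valued) via a measurable selection, and a dominated-convergence argument legitimised by the uniform bound on the integrand; everything else is the standard convex-analysis toolkit. Once this is in place, the convexity and the SGD rate are routine.
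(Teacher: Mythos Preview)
Your proposal is correct and follows essentially the same route as the paper: convexity via Lemma~\ref{lem:convex} plus preservation under expectation, then projected stochastic subgradient descent with the subgradient $[r_{k_c}\epsilon_i - p_i]_i$, a uniform norm bound, and a standard $O(1/\sqrt{T})$ rate for the averaged iterate. The only cosmetic differences are that the paper averages $M$ samples per step and uses the uniform (rather than step-weighted) running average; you are in fact more careful than the paper about the measurable-selection / interchange issue underlying the unbiasedness claim, which the paper simply asserts.
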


\begin{algorithm}
\caption{Stochastic Subgradient Descent}\label{two type algorithm}
\begin{algorithmic}[1]
\Procedure{Find the optimal ancillary service profile} {}
\State Given: $[p_i]_{i=1}^N,[r_k]_{k=1}^K, [c_k^M]_{k=1}^K$,$J$,$[\alpha_j]_{j=1}^J$
\State Initialize $c \in \mathcal{F}_c$,$j=0$
\For{$j$ in $[1:J]$}
\State Random sample $(\epsilon_1,\ldots,\epsilon_N)$ a total $M$ times.
\State $ c^{(j+1)} \gets c^{(j)}-\alpha_j*\left[\cfrac{1}{M}\sum_{m=1}^M r^{(m)}\epsilon^{(m)} - p\right]$
\State Project the $c^{(j+1)}$ onto the feasible set.  
\EndFor

\State \textbf{return $c=\frac{1}{J}\sum{j=1}^Jc^{(j)})$}
\EndProcedure

\end{algorithmic}
\end{algorithm}

\begin{figure} 
\centering
\includegraphics[width=0.8\columnwidth, trim={7em 4em 7em 0em}]{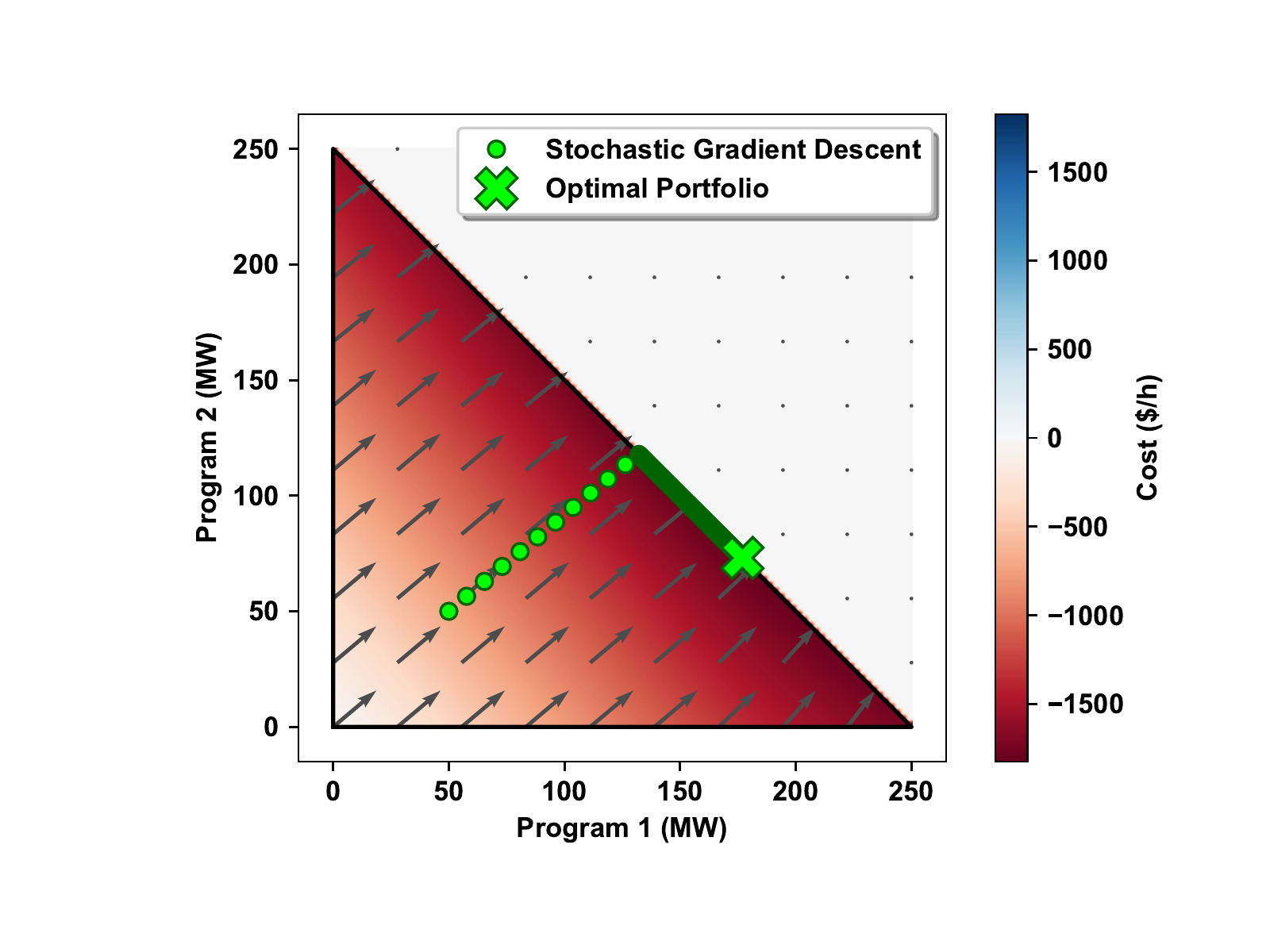}
\caption{Performing stochastic subgradient descent to obtain the optimal ancillary service profile.}
\label{fig:SGD}
\end{figure}

\begin{proof} In Lemma \ref{lem:convex}, we have shown that for any fixed $\epsilon$ in the probability distribution of $\epsilon$, the cost function $cost(\epsilon,c)$ is convex in terms of $c$. Hence, according to \cite{cvxoptimization} (chapter 3.2.1) the expected cost function $\EX_{_{\mathcal{\epsilon}}}[cost(\epsilon, c)]$ is also convex in terms of $c$. We note that in the optimization problem \eqref{op:multi}, both the objective function and the feasible region are convex. So we can use stochastic subgradient descent to obtain the optimal solution. The algorithm is presented in Algorithm \ref{two type algorithm}. 

To perform stochastic subgradient descent on this convex function, we randomly sample $\epsilon$ a total of $M$ time in each iteration, according to the known $\epsilon$ distribution. 
Denote the $m^\text{th}$ sample as $\epsilon^{(m)}\in\mathbb{R}^N$. For each sample $\epsilon^{(m)}$, we can compute the subgradient for the function $cost(\epsilon^{(m)},c)$, which is given by:
\begin{equation}
     \nabla_c \text{cost}(\epsilon^{(m)}, c) = r_{k_c}\epsilon^{(m)} - p.
     \label{eqn:maxrp}
\end{equation}
Clearly, the average of these sample subgradients would form an unbiased estimate for the subgradient of the objective function, i.e., 
\begin{align}
\nabla_c \mathbb{E}_\epsilon \left[\text{cost}(\epsilon, c)\right] = \mathbb{E}_\epsilon \left[\cfrac{1}{M}\sum_{m=1}^M \nabla_c \text{cost}(\epsilon^{(m)}, c)\right].
\end{align}
A crude bound on the expected squared 2-norm of the gradient estimate is given as: 
\begin{align}
    \mathbb{E}_\epsilon \big{\|}\cfrac{1}{M}\sum_{m=1}^M \nabla_c \text{cost}(\epsilon^{(m)}, c)\big{\|}^2_2\leq N\max(r^K,p_{\max})^2,
\end{align}
where $p_{\max}$ is the maximum value of all $p_i$, due to the relation given in (\ref{eqn:maxrp}).


Let the step sizes $\alpha_j$'s be chosen such that:
\begin{equation} \label{alpha}
\alpha_j =\frac{D}{\sqrt{N}\max(r^K,p_{\max})\sqrt{j}}, \quad j=1,2,\ldots,J.
\end{equation}
Using standard result in stochastic gradient descent \cite{duchi2018introductory}, we have:
\begin{eqnarray}
\EX \left[f(\bar{c})\right]-f(c^*)  \leq \frac{3D\sqrt{N}\max(r^K,p_{\max})}{2\sqrt{J}}, \label{eq:bound}
\end{eqnarray}
Where $\bar{c}=\frac{1}{J}\sum_{j=1}^Jc^{(j)}$, and the radius $D$, which is an upper bound on the diameter of the feasible region, is given as:
\begin{equation}
D =\begin{cases}
        C^M&
        \text{if} \quad N=1 \\
         \sqrt{2}C^M &
        \text{if}\quad N \geq 2.
\end{cases}   
\end{equation}
The right-hand side of equation \eqref{eq:bound} converges to zero as $J$ goes to infinity. Hence, the expected function value of the average of the points found so far converges to the optimal value $f(c^*)$, which completes the proof.
\end{proof}

In Fig.~\ref{fig:SGD}, stochastic subgradient descent is performed for a two dimensional case ($N=2$). As it can be seen, due to the convexity of the function, by starting anywhere in the feasible region and moving downhill using the gradient, we obtain the optimal ancillary service profile $c^* \in \mathcal{F}_c$.

\section{Dependent Programs: Frequency Regulation}\label{sec:dependent programs}
In this section, we investigate an important family of ancillary services called frequency regulation. Grid frequency is constantly fluctuating due to unbalanced load and generation, and frequency regulation is designed to maintain system frequency at a desired level. There are two types of regulation: regulation up (reg-up) and regulation down (reg-down). While participating in reg-up requires load owners to reduce their demand during certain hours and bring up the frequency. In reg-down, they need to increase their demand and use the over-generated power to bring down the frequency. In this sense, reg-down is different from other ancillary service programs. 
\begin{itemize}
    \item \textbf{Reg-up}: As previously discussed, if the cryptominer is not participating in any ancillary service program, it keeps mining with maximum capacity $C^M$. Consider a cryptomining facility that participates in the reg-up program with capacity $c_{up}$ and gets deployed by the system operator to reduce its consumption by $\epsilon_{up}c_{up}$. It is required to adjust its normal operating point from $C^M$ to $C^M-\epsilon_{up}c_{up}$. Hence, the total reduced power consumption of the cryptomining devices is $\sum_{k=1}^{K} d_k = \epsilon_{up}c_{up}$, which is covered using different machines. 

\item \textbf{Reg-down}: Consider a cryptomining facility participating in the reg-down program with capacity $c_{dn}$. This facility should have the ability to increase its power consumption by $c_{dn}$. Hence, it needs to adjust its normal operational point from $C^M$ to a new operating point $C^M-c_{dn}$, which gives it a headroom of size $c_{dn}$. If the cryptominer is deployed to increase its power consumption by $\epsilon_{dn}c_{dn}$, its normal operating point changes to: 
\begin{equation}
    C^M-c_{dn} + \epsilon_{dn}c_{dn} = C^M-(1-\epsilon_{dn})c_{dn} 
\end{equation}
In other words, the total reduced power consumption of all the cryptomining devices is $\sum_{k=1}^{K} d_k = (1-\epsilon_{dn})c_{dn}$.
\end{itemize}

As can be seen, reg-up is different from other ancillary services. It should be noted that we can use an almost identical program as in \eqref{op:multi} to find the optimal participation in the reg-down programs, with the exception that if program $i$ is a reg-down program, the $\epsilon_i$ term in the optimization problem needs to be replaced with $1-\epsilon_i$. 

In the following, we use the fact that reg-up and reg-down programs are not independent to define their joint probability and obtain a closed-form cost function for this special case. Let us consider a scenario where the cryptomining facility has two types of machines, with one being more efficient $r_1 < r_2$. This cryptomining facility can only participate in reg-up and reg-down programs, and we denote $c_1$, and $c_2$ as the committed capacity in the reg-up and down programs, respectively. We also denote $\epsilon_1$, and $\epsilon_2$ as the deployment ratio of the reg-up and down programs, respectively. In solving the optimization problem in the general setting, we use the probability distribution of $\epsilon$, which is a joint distribution over all programs. In the special case of regulation services, reg-up and reg-down are dependent on each other. In particular, at any moment, only one of them is being deployed to either increase or decrease the frequency. Let us assume that the reg-down program is deployed with probability $\theta$, and the reg-up program is deployed with probability $1-\theta$. If both programs are deployed, we assume the deployment ratio follows a truncated exponential distribution. More precisely, we consider the following joint probability distribution for $0\leq \epsilon_1 \leq 1$ and $0\leq \epsilon_2 \leq 1$: 
\begin{equation} \label{eq:probability}
		PDF(\epsilon_1, \epsilon_2) = 
		\theta *\delta(\epsilon_1)  \underbrace{\frac{\lambda_2 e^{-\lambda_2 \epsilon_2}}{1-e^{-\lambda_2}}}_{\textit{reg-down}} + (1-\theta) * \delta(\epsilon_2) \underbrace{ \frac{\lambda_1 e^{-\lambda_1 \epsilon_1}}{1-e^{-\lambda_1}} }_{\textit{reg-up}},
\end{equation} 
where terms indicated by reg-up and reg-down show the truncated exponential distribution for each program. Our goal is to solve the following optimization problem:

\begin{subequations} \label{op:regup-down}
\begin{eqnarray}
	&\underset{c}{{\min}}& \EX_{_{\mathcal{\epsilon}}}[cost_{reg}(\epsilon, c)]\\
	&\mbox{s.t.}\;& c \in \mathcal{F}_c,
 \end{eqnarray} 
\end{subequations}
Where $cost_{reg}(\epsilon, c)$ is a special case of the general cost function because only reg-up and reg-down programs are considered, with the reg-down negation as discussed earlier. Using the above probability distribution, we calculate the closed form for the expected cost function of these programs as follows:

\begin{thm} Consider a cryptomining facility that can only participate in reg-up and reg-down programs. Given the joint probability distribution \eqref{eq:probability} for the deployment rate of programs, we have:

\begin{eqnarray}
    \EX_{_{\mathcal{\epsilon}}}[cost_{reg}(\epsilon, c)] = 
    \begin{cases}
        Q_1&
        \text{if} \quad c_1+c_2 < c_1^M \\
        Q_2 &
        \text{if} \quad c_2 \geq c^M_1 \\
        Q_3&
        \text{if} \quad otherwise,
    \end{cases}
\end{eqnarray}
where 
\begin{eqnarray}
&& Q_{1} = c_1[(1-\theta)r_1\EX[\epsilon_1]-p_1] +c_2[r_1(1-\theta\EX[\epsilon_2])-p_2] \notag\\
&& Q_2 = Q_{1}+(r_1-r_2)\bigg( \theta \big[\frac{\lambda_2c_1^M + c_2 (1-\lambda_2-e^{-\lambda_2 (1- c_1^M/c_2)}}{\lambda_2(1-e^{-\lambda_2})}) \big] \notag \\
&& + (1-\theta) \big[ c_1^M-c_2 -c_1 \EX[\epsilon_1]  \big] \bigg) \notag \\
&& Q_3 = Q_{1}+ (r_1-r_2)(1-\theta) \notag \\
&&  \big[\frac{\lambda_1(c_2-c_1^M)+ c_1 (1+\lambda_1-e^{-\lambda_1 (c_1^M-c_1-c_2)/c_1})}{\lambda_1(e^{\lambda_1}-1)}) \big].
\end{eqnarray}

\label{thm:reg}  
\end{thm}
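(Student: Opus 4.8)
The plan is to reduce the computation to (i) an easy baseline expectation and (ii) the expectation of a single positive-part term, and then to evaluate a few truncated-exponential integrals under a case split on $(c_1,c_2,c_1^M)$. Concretely, I would first apply the structure already established in Theorem~\ref{thm:deployment} and Lemma~\ref{lem:convex}, with the reg-down substitution $\epsilon_2 \mapsto 1-\epsilon_2$ indicated just before \eqref{op:regup-down}: with $K=2$ types the effective deployed capacity is $s(\epsilon,c) \coloneqq \epsilon_1 c_1 + (1-\epsilon_2)c_2$ (which stays below $C^M$ since $c\in\mathcal F_c$), the critical type is $k_c=1$ when $s(\epsilon,c)\le c_1^M$ and $k_c=2$ otherwise, and $cost_{reg}(\epsilon,c)=\max\{cost(\epsilon,c,1),cost(\epsilon,c,2)\}$ with $cost(\epsilon,c,1)=c_1(r_1\epsilon_1-p_1)+c_2(r_1(1-\epsilon_2)-p_2)$ and $cost(\epsilon,c,2)-cost(\epsilon,c,1)=(r_2-r_1)\big(s(\epsilon,c)-c_1^M\big)$. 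Since $r_2-r_1>0$, this gives the clean decomposition
\[
cost_{reg}(\epsilon,c)=cost(\epsilon,c,1)+(r_2-r_1)\big(s(\epsilon,c)-c_1^M\big)^+ ,
\]
so that $\EX_\epsilon[cost_{reg}(\epsilon,c)]=Q_1+(r_2-r_1)\,\EX_\epsilon\big[(s(\epsilon,c)-c_1^M)^+\big]$ with $Q_1\coloneqq \EX_\epsilon[cost(\epsilon,c,1)]$.

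Computing $Q_1$ is routine: under \eqref{eq:probability} the mixture puts mass $\theta$ on the event $\{\epsilon_1=0,\ \epsilon_2\sim\text{trunc-exp}(\lambda_2)\text{ on }[0,1]\}$ and mass $1-\theta$ on $\{\epsilon_2=0,\ \epsilon_1\sim\text{trunc-exp}(\lambda_1)\text{ on }[0,1]\}$; substituting the conditional means $\EX[\epsilon_1]$ and $\EX[\epsilon_2]$ into the affine form of $cost(\epsilon,c,1)$ and collecting the coefficients of $c_1$ and $c_2$ gives exactly the stated $Q_1$.

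The crux is $g(c)\coloneqq \EX_\epsilon[(s(\epsilon,c)-c_1^M)^+]$. Conditioning on the reg-up/reg-down Bernoulli splits it as $g(c)=\theta\,\EX_{\epsilon_2}[((1-\epsilon_2)c_2-c_1^M)^+]+(1-\theta)\,\EX_{\epsilon_1}[(\epsilon_1c_1+c_2-c_1^M)^+]$. The positive part in the reg-down branch is active only on $\{\epsilon_2<1-c_1^M/c_2\}$, which is empty unless $c_2>c_1^M$; the one in the reg-up branch is active on $\{\epsilon_1>(c_1^M-c_2)/c_1\}$, which is empty when $c_1+c_2<c_1^M$, is all of $[0,1]$ when $c_2\ge c_1^M$, and is a proper sub-interval of $[0,1]$ otherwise. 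I would check that these combinations are exactly the three regions in the statement: $c_1+c_2<c_1^M$ makes both branches inactive so $g\equiv0$ and the answer is $Q_1$; $c_2\ge c_1^M$ makes the reg-up branch fully active and the reg-down branch partially active, producing $Q_2$; and the remaining case ($c_2<c_1^M$ but $c_1+c_2\ge c_1^M$) leaves only the reg-up branch partially active, producing $Q_3$. On each active branch the needed quantity is an integral of the form $\int(\alpha+\beta x)e^{-\lambda x}\,dx$ over an interval with a finite endpoint, handled by $\int_0^a e^{-\lambda x}dx=(1-e^{-\lambda a})/\lambda$ and $\int_0^a xe^{-\lambda x}dx=(1-e^{-\lambda a}(1+\lambda a))/\lambda^2$; plugging $a=1-c_1^M/c_2$ (reg-down) or $a=(c_1^M-c_2)/c_1$ (reg-up), using $(1-\epsilon_2)c_2-c_1^M=c_2((1-\epsilon_2)-c_1^M/c_2)$ and $\epsilon_1c_1+c_2-c_1^M=c_1(\epsilon_1-a)$ to simplify, and finally multiplying by $(r_2-r_1)$ and the branch probability reproduces the bracketed correction terms of $Q_2$ and $Q_3$ (the $(r_1-r_2)=-(r_2-r_1)$ sign in the statement is absorbed by flipping the sign inside the bracket).

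The main obstacle I anticipate is purely the bookkeeping: checking that the activity patterns of the two positive parts are constant precisely on the three stated regions (and that the hyperplanes $c_1+c_2=c_1^M$ and $c_2=c_1^M$ are the only relevant boundaries), and pushing the truncated-exponential integrals through with the right limits and without sign slips. There is nothing conceptually deeper than the max-of-affine representation already proved in Lemma~\ref{lem:convex} together with linearity of expectation.
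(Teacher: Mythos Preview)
Your proposal is correct and follows essentially the same route as the paper's proof: both decompose the expected cost into the baseline $Q_1=\EX_\epsilon[cost(\epsilon,c,1)]$ plus a correction term (the paper writes it as $(r_1-r_2)\,\EX\big[\min\{c_1^M-s(\epsilon,c),0\}\big]$, which is identical to your $(r_2-r_1)\,\EX\big[(s(\epsilon,c)-c_1^M)^+\big]$), then condition on the reg-up/reg-down Bernoulli, perform the same case split on $(c_1,c_2,c_1^M)$ to determine the active integration ranges, and evaluate the resulting truncated-exponential integrals. Your invocation of the max-of-affine structure from Lemma~\ref{lem:convex} to justify the positive-part decomposition is a slightly tidier packaging, but the substance and the bookkeeping are the same.
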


\begin{proof}
    Refer to Appendix \ref{appendixA}.
\end{proof}
Given the closed form and the convexity of this cost function, we can easily solve the optimization problem \eqref{op:regup-down} using standard convex optimization solvers.
\section{Single Machine Type facilities}
\label{sec:risk_aware_formulation}
\subsection{Risk-Oblivious Participation}
In this section, we consider a special setting where the cryptomining facility only uses one type of crptomining device with reward $r$. In this case, the optimization problem \eqref{op:multi} reduces to the following:

\begin{subequations} \label{eq:OPone} 
\begin{eqnarray}
	&\underset{c}{\min}& \EX_{_{\mathcal{\epsilon}}}[\sum_{i=1}^{N} c_i \epsilon_i r-\sum_{i=1}^{N} c_i p_i ]\\
	&\mbox{s.t.}\;& c \in \mathcal{F}_c,
 \end{eqnarray} 
\end{subequations}
Where $c_i \epsilon_i r$ is the loss of cryptomining revenue due to deployment in ancillary service program $i$. As can be seen, the deployment ratio $\epsilon$ no longer appears in the constraints, which makes it easier to find the optimal solution. The cost function in this optimization problem reduces to taking the expected value with respect to each program as follows:
\begin{equation}
\sum_{i=1}^{N} c_i \big(r \EX[\epsilon_i ]-p_i  \big).     
\end{equation}
This optimization problem is linear with respect to $c_i$, and $\EX[\epsilon_i]$ is obtained using the distribution of $\epsilon_i$ based on historical data. The solution to this optimization problem is presented in the following Theorem:

\begin{thm} The optimal ancillary service capacity of the crypotominer for participation in each program $i \in \mathcal{N}$ is: 
\begin{equation} 
		c_i \mbox{=} \begin{cases}
		C^M\mathcal{U}\big(p_{i^*} -  r\EX[\epsilon_{i^*} ] \big) \, & \mathrm{if } \quad  i = i^*,\\
  0	\, & \mathrm{if } \quad  i \neq i^*,
		\end{cases} 
\end{equation}
where 
\begin{equation}
i^*= \arg\min\limits_{{ i \in \mathcal{N}  }} \big[r\EX[\epsilon_i] -p_i \big],
\end{equation}
represents the ancillary service program with the minimum per-unit cost, and $\mathcal{U}(x)$ is a unit step function with value one for $x \geq 0$, and zero otherwise. 
\label{thm:linear}  
 \end{thm}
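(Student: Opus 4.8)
The plan is to recognize that, in the single–machine–type case, problem~\eqref{eq:OPone} is simply a linear program over the scaled simplex-type region $\mathcal{F}_c$. Writing the objective as $\sum_{i=1}^{N} c_i w_i$ with $w_i := r\EX[\epsilon_i]-p_i$, the index $i^{*}=\arg\min_{i\in\mathcal{N}} w_i$ is precisely the coordinate carrying the smallest (most negative) cost coefficient. For any feasible $c$ we have $c_i\ge 0$ and $S:=\sum_{i=1}^{N}c_i\in[0,C^M]$, so the two elementary inequalities
\[
\sum_{i=1}^{N} c_i w_i \;\ge\; w_{i^{*}}\sum_{i=1}^{N} c_i \;=\; w_{i^{*}} S \;\ge\; \min\{0,\,C^M w_{i^{*}}\}
\]
hold: the first because $w_i\ge w_{i^{*}}$ and $c_i\ge 0$, the second because $t\mapsto w_{i^{*}}t$ is monotone on $[0,C^M]$. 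This pins down the optimal value as $\min\{0,\,C^M w_{i^{*}}\}$, provided we can exhibit a feasible point attaining it.

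Next I would do a short case split on the sign of $w_{i^{*}}$ to produce the minimizer and match it to the stated formula. If $w_{i^{*}}\le 0$ — equivalently $p_{i^{*}}-r\EX[\epsilon_{i^{*}}]\ge 0$, so $\mathcal{U}(p_{i^{*}}-r\EX[\epsilon_{i^{*}}])=1$ — put all capacity on program $i^{*}$, i.e. $c_{i^{*}}=C^M$ and $c_i=0$ for $i\neq i^{*}$; the objective equals $C^M w_{i^{*}}=\min\{0,C^M w_{i^{*}}\}$, so this is optimal and agrees with the claimed expression. If instead $w_{i^{*}}>0$, so $\mathcal{U}(p_{i^{*}}-r\EX[\epsilon_{i^{*}}])=0$, take $c\equiv 0$; the objective is $0=\min\{0,C^M w_{i^{*}}\}$, again optimal and matching $c_i=0$ for all $i$. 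The degenerate case $w_{i^{*}}=0$ is covered by the first branch. Equivalently, and perhaps more cleanly, since the objective is linear an optimum is attained at a vertex of the feasible polytope, whose vertices are the origin and the points $C^M e_i$, $i\in\mathcal{N}$; evaluating the objective there gives $0$ and $C^M w_i$ respectively, and selecting the smallest reproduces the same answer.

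The only subtleties — rather than a genuine obstacle — are bookkeeping ones. First, $\mathcal{F}_c$ is written with the strict inequality $\sum_i c_i<C^M$, so when $w_{i^{*}}<0$ the prescribed point $C^M e_{i^{*}}$ lies on the boundary; I would simply remark that the problem is understood over the closure $\{c\succeq 0,\ \sum_i c_i\le C^M\}$, the infimum over $\mathcal{F}_c$ itself being identical and approached along $c=(C^M-\delta)e_{i^{*}}$ as $\delta\downarrow 0$. Second, if the minimum of $w_i$ is attained by several programs the optimizer is non-unique and any such program may serve as $i^{*}$. Third, one must keep track of the sign flip between the unit-step argument $p_{i^{*}}-r\EX[\epsilon_{i^{*}}]$ and $w_{i^{*}}$, so that ``participate with full capacity'' corresponds exactly to the ancillary payment outweighing the expected mining-revenue loss; verifying that the case analysis reproduces the piecewise formula with $\mathcal{U}(\cdot)$ in every branch is the last check I would perform.
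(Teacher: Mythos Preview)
Your proposal is correct and follows essentially the same approach as the paper: both recognize that the objective is linear over the simplex-like region $\mathcal{F}_c$, so an optimum occurs at a vertex (the origin or one of the points $C^M e_i$), after which comparing $0$ with $C^M w_{i^*}$ yields the stated formula. Your treatment is in fact more careful than the paper's, which invokes the simplex method in one line; your explicit inequality chain, the sign-case split, and the remarks on the strict-inequality boundary and non-uniqueness of $i^*$ are all sound refinements of the same idea.
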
 

\begin{proof}
We use the principles
of the simplex method, where the inequality constraints define a polygonal region, and the solution is at one of the vertices. Hence, there is at most one non-zero $c_i$, and to find the optimal solution, one should determine the ancillary service program $i^*$ with the minimum cost $ r \EX[\epsilon_i ]- p_i$. If this cost is negative (profit is positive), the cryptominer participates in program $i^*$ with maximum capacity $c_{i^*}=C^M$. Otherwise, if the profit is negative, the revenue of the ancillary service is not worth the loss of revenue due to not mining cryptocurrency, so the cryptominer does not participate in any program $c_{i^*}=0$.  
\end{proof}
In the following section, we consider the risk associated with each program and rewrite the optimization problem for a cryptominer with one type of machine to jointly minimize the expected cost and the risk of participation in ancillary services.

\subsection{Risk-Aware Participation}
In the modern power system, due to the volatile and uncertain nature of both load and generation, the grid experiences significant supply-demand mismatches and price variations. Therefore, participation in ancillary services always comes with potentially large risks, which might not be acceptable to the risk-averse nature of some cryptocurrency mining facilities with their steady source of income. In this section, we present a risk-aware optimization problem where the cryptominer can maximize its profit while controlling the risk below its desired level. At each time slot, we solve the following optimization problem:

 \begin{subequations}
\begin{eqnarray}\label{op:risk-aware} 
	&{\min}& \EX_\epsilon \big[\sum_{i=1}^{N} c_i\big( \epsilon_i r-p_i \big) \big]+ \lambda  \textit{Var} \, \big[\sum_{i=1}^{N} c_i\big(\epsilon_i r-p_i \big) \big] \\
 &\mbox{s.t.}\;& c \in \mathcal{F}_c,
\end{eqnarray}
 \end{subequations}
where the second term captures the uncertainty of the information, and its impact is limited by the control variable $\lambda$. Considering $Var(X)= \EX(X^2)- \EX^2(X)$, this problem formulation can be further reduced to the following:
 \begin{subequations}

\begin{eqnarray}\label{op:risk-aware2}
	&{\min}&\sum_{i=1}^{N} \big[ c_i(\EX [\epsilon_i] r-p_i ) + \lambda  c_i^2 r^2 \textit{Var}\, [\epsilon_i] \big] \\
	&\mbox{s.t.}\;& c \in \mathcal{F}_c,
	\end{eqnarray} 
  \end{subequations}
which is a standard quadratic optimization with linear constraints and can be solved using standard solvers. Here, $\lambda$ is a weight that determines the importance of risk minimization compared to profit maximization, and a larger $\lambda$ leads to a more conservative strategy with less participation in ancillary services. For example, if we have two different ancillary service programs with the same expected return but one has a higher variance, the solution to this optimization problem chooses the program with less variations in profit. 

\section{Online Optimization}\label{sec:online}
In this section, we formulate cryptominers' participation in ancillary services as an online optimization problem. We consider a multi-round game between a learner and an adversary. In our setup, the cryptominer plays the role of a learner trying to minimize the cost of participating in ancillary services, and the grid operator is the adversary. At each round $t$, the cryptominer tries to obtain the ancillary service profile $c(t)$ that solves:
\begin{subequations} 
	\begin{eqnarray}
	&\underset{c(t)}{\min}& cost_t(c(t)) \\
	&\mbox{s.t.}\;& c(t) \in \mathcal{F}_c,
 \end{eqnarray} 
\end{subequations}
where $\mathcal{F}_c$ is the feasible region for the ancillary service profile defined in \eqref{feasible-region}, and:
\begin{align} 
&cost_t(c(t)) = \notag\\
&\,\,\sum_{k <k
_c(t)} (r_k(t)-r_{k_c}(t)) c_k^M + \sum_{i=1}^{N} c_i(t)( r_{k_c}(t)\epsilon_i(t) - p_i(t) ). \label{costonline}
 \end{align} 
These cost functions are functions of $c(t)$, and they are announced sequentially. At each round $t$, the cryptominer fixes its hourly ancillary service profile $c(t)$. And depending on the deployment rate $\epsilon(t)$, cryptomining reward $r_k(t)$, and ancillary service price $p_i(t)$ chosen by the adversary, cryptominer suffers the $cost_t(c(t))$. To solve this online optimization problem, we use a regret minimization framework, where the notion of regret is used to measure the performance of our online algorithm. In this framework, the regret of the cryptomining facility after $T$ time slots is defined as: 
\begin{eqnarray} 
R_T= \sum_{t=1}^{T} cost_{t}(c(t)) - \underset{{ c \in \mathcal{F}_c}}{\min} \sum_{t=1}^{T} cost_{t}(c).
\end{eqnarray}
The regret $R_T$ quantifies the difference in performance between the ancillary service profile generated by an online algorithm for the cryptominer, and the performance of the suboptimal, round-to-round constant, ancillary service profile $c^*$, where:
\begin{equation}
c^*= \arg\min\limits_{{ c \in \mathcal{F}_c  }} \sum_{t=1}^{T} cost_{t}(c) .
\end{equation}
We use the online gradient descent (OGD) algorithm \cite{zinkevich2003online} to minimize regret by updating the ancillary service profile $c(t)$. At each round $t$, the regret minimization algorithm generates the update without knowing the current objective function and only using the information of the previous round. For $t=1$ to $T$, the OGD algorithm updates the ancillary service profile iteratively as follows:
\begin{eqnarray}
    && c'(t+1) = c(t) -\eta_t \nabla cost_t(c(t)) \\
    && c(t+1) = \textbf{proj}_{\mathcal{F}_c} (c'(t+1))
\end{eqnarray}
In each iteration, the algorithm uses the gradient of the cost function from the previous step and updates the ancillary service profile $c(t)$. This may result in finding a point $ c'(t+1)$ that is outside the feasible region. In such cases, $c'(t+1)$ is projected into the convex feasible region $\mathcal{F}_c$, where the projection is defined as:
\begin{equation}
    \textbf{proj}_{\mathcal{F}_c} (c'(t+1))= \arg\min\limits_{{c \in \mathcal{F}_c }} ||c'(t+1) - c||.
\end{equation}
It should be noted that in the case of ancillary services, each hour has a very different price pattern and deployment rate. Hence, instead of learning a single cost function for all the hours of the day, we learn a separate cost function for each hour. In other words, when calculating $c(t)$ for a certain hour of the day, we use $c(t-1)$ and $cost_t(c(t-1))$ of the same hour during the last day. 

While the cost function changes every round and can be very different from the costs observed so far, the algorithm induces a sublinear regret. The following Theorem summarizes the convergence of the ancillary service profile updates generated by the algorithm.

\begin{thm}\label{online}
Let us assume that $r_k(t)\leq r_{max}$, and $p_i(t)<p_{max}$ for all $t \in \mathcal{T}$,  $k\in \mathcal{K}$, and $i \in \mathcal{N}$. The online subgradient descent with step sizes ${\eta_t = \frac{D}{G\sqrt{t}}, \, \, t \in [1:T] }$ guarantees the following:
\begin{equation}
    R_T \leq \frac{3}{2} GD \sqrt{T} = 3C^M \sqrt{\frac{TN}{2}} \max \{r_{max}, p_{max}\},  
\end{equation}
where 
\begin{equation}
G=\sqrt{N} \max \{r_{max}, p_{max}\}, \, \, \, and \, \, \,
 D =\begin{cases}
        C^M&
        \text{if} \quad N=1 \\
         \sqrt{2}C^M &
        \text{if}\quad N \geq 2.
\end{cases}   
\end{equation}
\end{thm}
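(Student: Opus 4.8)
The plan is to recognize Theorem \ref{online} as a direct instance of Zinkevich's online gradient descent regret bound \cite{zinkevich2003online}, for which it suffices to check three ingredients: (i) each round's objective $cost_t(\cdot)$ is convex on $\mathcal{F}_c$; (ii) the subgradients used by the algorithm are uniformly bounded by $G$; and (iii) the feasible set $\mathcal{F}_c$ has Euclidean diameter at most $D$. Step (i) is immediate: for fixed $t$, the function in \eqref{costonline} is exactly $cost(\epsilon,c)$ from \eqref{cost} with the time-$t$ values $r_k(t),p_i(t),\epsilon_i(t)$ substituted, so Lemma \ref{lem:convex} gives that it is piecewise affine and convex in $c(t)$, with the subgradient $r_{k_c}(t)\epsilon(t)-p(t)$ of \eqref{eqn:maxrp} available everywhere (on a separating hyperplane, where $k_c$ is not unique, any of the finitely many affine pieces' gradients is a valid subgradient, and the bound below holds for each of them).

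Next I would carry out step (ii). Each coordinate of $r_{k_c}(t)\epsilon(t)-p(t)$ lies in $[-p_{max},\,r_{max}]$, because $0\le\epsilon_i(t)\le 1$, $0\le r_{k_c}(t)\le r_{max}$, and $0\le p_i(t)<p_{max}$; hence its Euclidean norm is at most $\sqrt{N}\max\{r_{max},p_{max}\}=G$, uniformly over $t$ and over the region, regardless of which critical type is active. For step (iii) I would compute the diameter of $\mathcal{F}_c$ (equivalently of its closure, which is the simplex with vertices $0$ and $C^M e_1,\dots,C^M e_N$): the diameter of a bounded convex set is attained at extreme points, and $\|C^M e_i - C^M e_j\| = \sqrt{2}\,C^M$ for $i\neq j$ while $\|C^M e_i - 0\| = C^M$, so $D=\sqrt{2}\,C^M$ for $N\ge 2$ and $D=C^M$ for $N=1$, matching the statement.

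With convexity, $\|\nabla cost_t(c(t))\|\le G$, and $\|c-c'\|\le D$ on $\mathcal{F}_c$, the standard OGD analysis applies: using non-expansiveness of Euclidean projection onto the convex set $\mathcal{F}_c$ (so the projection step cannot increase the distance to the comparator $c$), telescoping $\|c(t)-c\|^2$, and $\sum_{t=1}^{T}1/\sqrt{t}\le 2\sqrt{T}$, the step size $\eta_t=D/(G\sqrt{t})$ yields $R_T\le \frac{3}{2}GD\sqrt{T}$. Substituting $G=\sqrt{N}\max\{r_{max},p_{max}\}$ and $D=\sqrt{2}\,C^M$ gives $R_T\le \frac{3}{2}\sqrt{2N}\,C^M\max\{r_{max},p_{max}\}\sqrt{T}=3C^M\sqrt{TN/2}\,\max\{r_{max},p_{max}\}$; the $N=1$ case is identical with $D=C^M$. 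The per-hour bookkeeping described just before the theorem (a separate cost function learned for each hour of the day) simply means the argument is run in parallel on up to twenty-four interleaved OGD instances, each obeying the same bound, so it is enough to treat one sequence of length $T$.

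I do not expect a genuine obstacle here; the only point requiring care is the non-smoothness of $cost_t$ at the seams between regions of differing critical type — one must state explicitly that the update is subgradient (not gradient) descent there, and that the uniform bound $G$ holds for \emph{every} valid subgradient, which it does precisely because $r_{k_c}(t)$ is sandwiched in $[0,r_{max}]$ no matter which affine piece is selected.
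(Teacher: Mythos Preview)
Your proposal is correct and follows essentially the same route as the paper's own proof: invoke the standard Zinkevich/Hazan OGD regret bound $R_T\le \tfrac{3}{2}GD\sqrt{T}$, then verify the diameter $D$ of the simplex $\mathcal{F}_c$ and the uniform subgradient bound $G$ coordinatewise before substituting. If anything, your write-up is slightly more careful than the paper's---you explicitly invoke Lemma~\ref{lem:convex} for convexity, bound the absolute value of each coordinate (the paper only writes the one-sided inequality), and address the choice of subgradient at the seams between regions of different critical type.
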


\begin{proof} Refer to Appendix \ref{proof:online}.
\end{proof}

This result shows that the average regret, i.e., $R_T/T$, converges to zero as $T \to \infty$. In other words, the data center is not well-informed at the beginning, and its decisions are not close to optimal. As the number of days increases, the average performance of the ancillary service profile generated by the algorithm converges to that of the optimal round-to-round invariant profile ($c^*$). 
\section{Numerical Experiments}\label{sec:experiment}
In this section, we present a Texas-based case study of the algorithms introduced  earlier, wherein we utilize real-world cryptocurrency and electricity market historical records to validate our proposed approaches.

The rest of this section is organized as follows. First, we introduce the simulation setup, including the source of the historical records and assumptions made. Second, we present simulation results for the offline optimization cases, including the special cases of risk-aware profile optimization and frequency regulation (reg-up/down) capacity co-optimization and the more general stochastic subgradient descent-based optimization with arbitrary demand response programs. Finally, we present results for online optimization, wherein we demonstrate how our algorithm learns to minimize average regret over time.

\subsection{Simulation Set-up}
We rely on a subset of publicly available ERCOT historical records to sample electricity prices \cite{ERCOTRTM2022} along with ancillary service prices and deployment rates \cite{ERCOTAS2022}. Furthermore, we use Bitcoin price data \cite{YAHOOBIT2022} and a record of mining difficulty over the last few years to estimate potential mining-based costs and rewards. 

We estimate the amount of mining energy required to earn one Bitcoin as follows. For state-of-the-art machines, energy efficiency ranges from roughly 0.02 to 0.03 J/GH. At the current mining reward of 6.25 Bitcoin per mined block (halved every four years, next in Apr 2024), and a mining difficulty level ranging from 15 to 30 trillion (unitless) from 2020 to 2022, we estimate a range of 50 to 150 MWh of energy required to mine one Bitcoin. Based on that, we consider two sets of machines used by a given facility for the remainder of our simulations, one set averaging at 110 MWh/Bitcoin with a total capacity of 100 MW and another (less efficient one) at 130 MWh/Bitcoin with a total capacity of 150 MW. Consequently, for any given hour, a facility in our numerical experiments is assumed to be able to participate with no more than 250 MW in capacity.

For each of the following simulation cases, we rely on one or two machine types and profile optimization over two programs. In the special case of Sections \ref{sec:risk_aware_results} and \ref{sec:reg_up_dn_results}, our results are based on analytical solutions given earlier, whereas in Sections \ref{sec:general_sgd_results} and \ref{sec:online_results}, stochastic subgradient descent is used to determine the optimal profile for a given scenario.

Finally, for multiple experiments described in this section, we model \textit{price-responsive} programs as follows. For some threshold, say \$60/MWh, if and only if real-time wholesale electricity prices exceed this threshold, the grid operator deploys 100\% of the committed capacity for this program. That is, $\epsilon\in\{0,1\}$ for this price-responsive program.

\subsection{Offline Optimization Cases}

\subsubsectionCUSTOM{Risk-Aware Participation}
\label{sec:risk_aware_results}
In this section, we consider the special case of only one mining machine type, as presented in Section \ref{sec:risk_aware_formulation}. Specifically, we examine the risk-aware case, wherein a control variable $\lambda \geq 0$ is employed to mitigate the risk taken during profile optimization. Since the problem posed in Eq. (\ref{op:risk-aware2}) can be expressed as a standard quadratic convex optimization problem, we can obtain an analytical solution to it. From the results shown in Fig. \ref{fig:risk}, based on an experiment conducted over one week in the summer of 2022 with two demand response programs, price-responsive and reg-up, we make the following observations. When $\lambda$ is increased above zero, i.e. less risk is taken, we observe less fluctuations in the profit, yet less average profit as well, as shown in the legend of the figure ($\lambda=0.0004$). This means that while hours of negative profit were avoided (e.g. night time as shown in the figure), many opportunities to earn profit were also squandered due to the particular risk-averse approach adopted.

\begin{figure}[htbp!]
\centering
\includegraphics[width=\columnwidth]{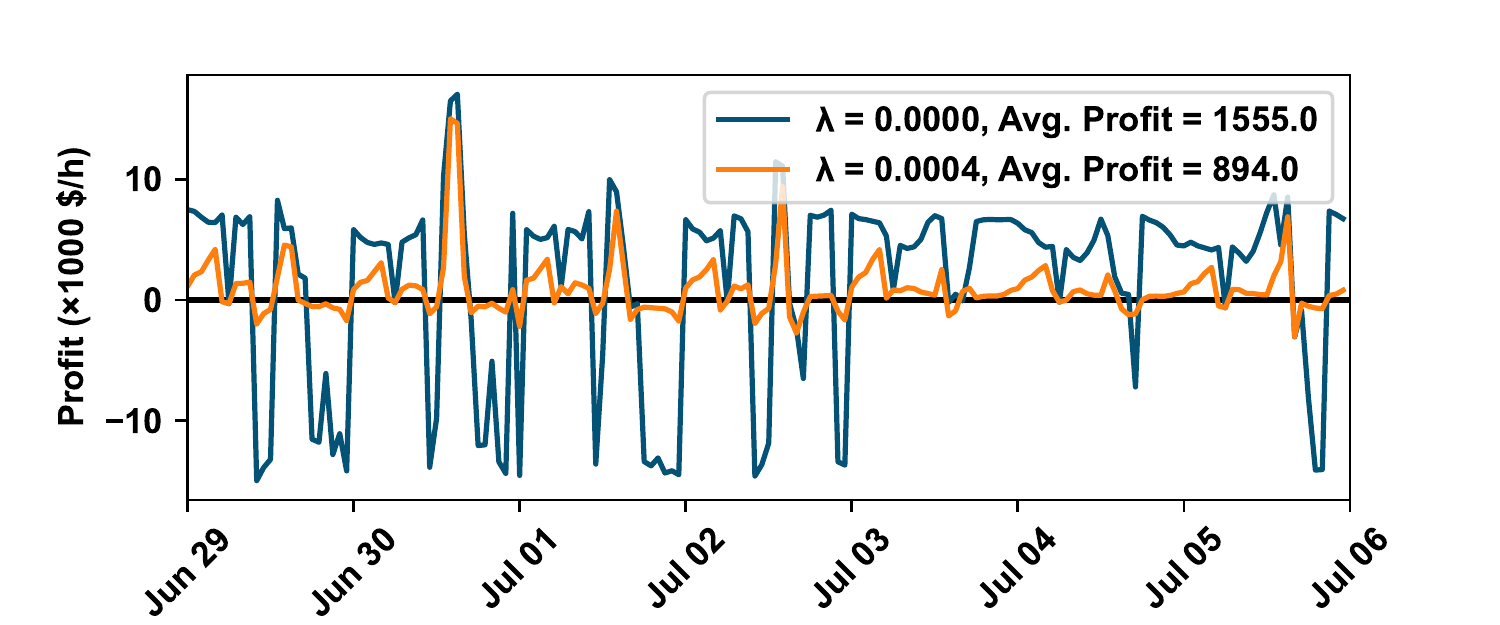}
\includegraphics[width=\columnwidth]{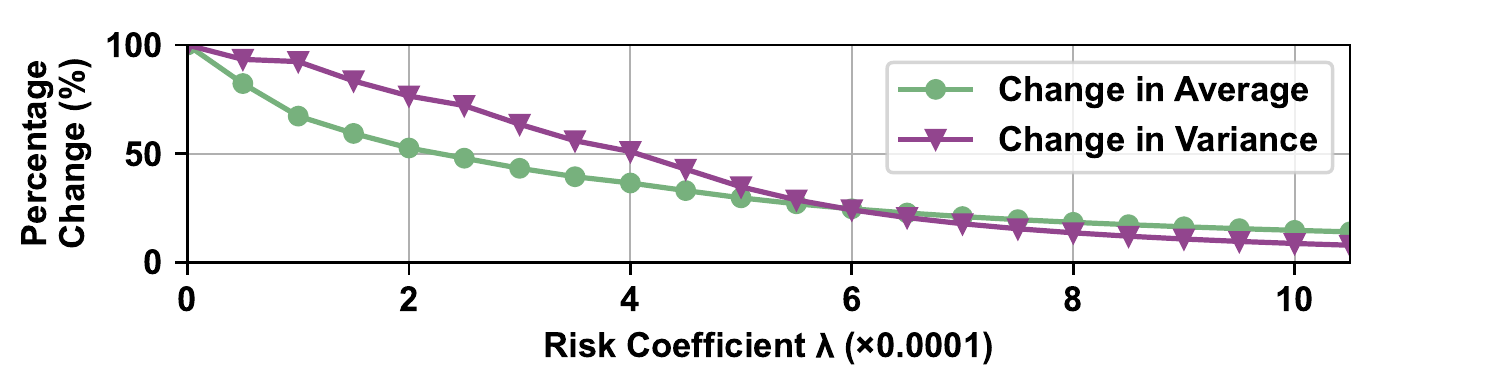}
\caption{(top) Profits using summer 2022 data under two approaches: risk-unaware ($\lambda=0$) and risk-aware ($\lambda=0.0004$). (bottom) The trade-off between average profit and variance for increased $\lambda$.}
\label{fig:risk}
\end{figure}

\begin{figure*}[htbp!]
    \centering   
    \includegraphics[width=0.95\textwidth, trim={12em 2em 12em 2em}]{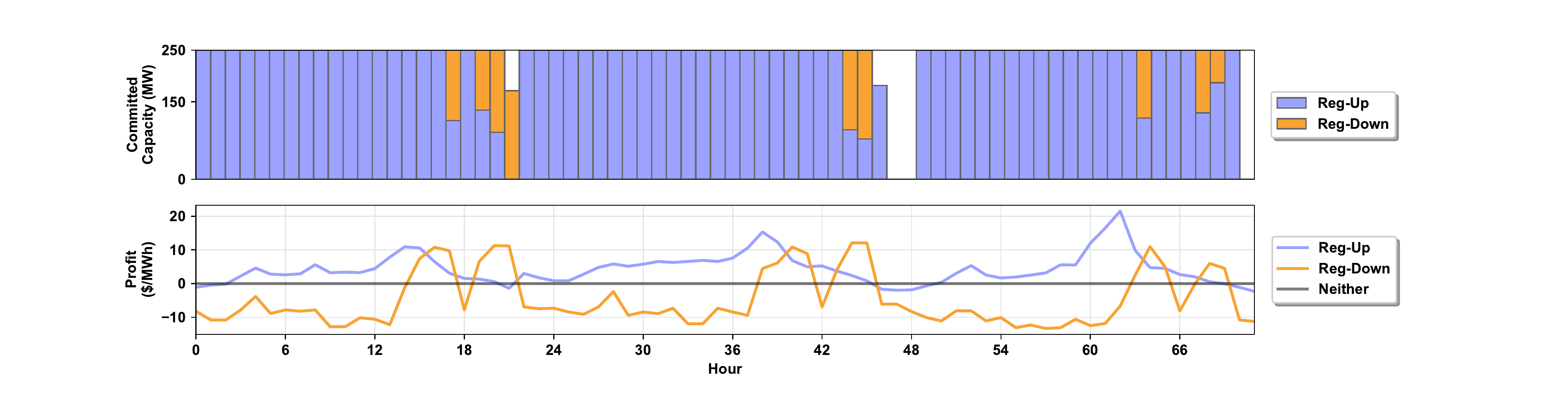}
    \caption{(top) Selected profile and (bottom) marginal profit per reg-up/down program for each hour on May 5-7, 2020.}
    \label{fig:reg_up_dn_choice_and_reason}
\end{figure*}

\subsubsectionCUSTOM{Dependant Programs - Frequency Regulation}
\label{sec:reg_up_dn_results}
In this section, we consider the special case of a mixture of reg-up and reg-down ancillary service programs, introduced in Section \ref{sec:dependent programs}. This case is special because: (a) deployment rates $\epsilon_{up}$ and $\epsilon_{dn}$ are not independently distributed - at most one can be positive and the other is forced to zero. (b) we have an analytical solution for optimal profile selection, as explained in Section \ref{sec:dependent programs}, based on the assumption that the underlying probability density functions for each program are exponential. Historical records of reg-up/down deployment at ERCOT reveal that an exponential distribution approximates the distribution for $\epsilon_{up}$ and $\epsilon_{dn}$ very closely, with expected values of 18\% and 27\%,  respectively,  over the duration of the experiment.

The results in Fig. \ref{fig:reg_up_dn_choice_and_reason} reveal that for many hours, reg-up is most profitable. However, the optimal strategy is not the same for all hours. For example, during hours 17, 19 and 20, the optimal profile is a mixture of reg-up/down, and during hours 47 and 48, choosing neither is better. In conclusion, the results reveal that optimal profile selection is not the same for each hour, and simply adopting a profile with a fixed split between the different programs is likely sub-optimal. We further explore this discovery in the next subsection for a more general case where the analytical solution is unknown.

\begin{figure}[htbp!]
\centering
\includegraphics[width=.9\columnwidth, trim={4em 2em 1em 2em}]{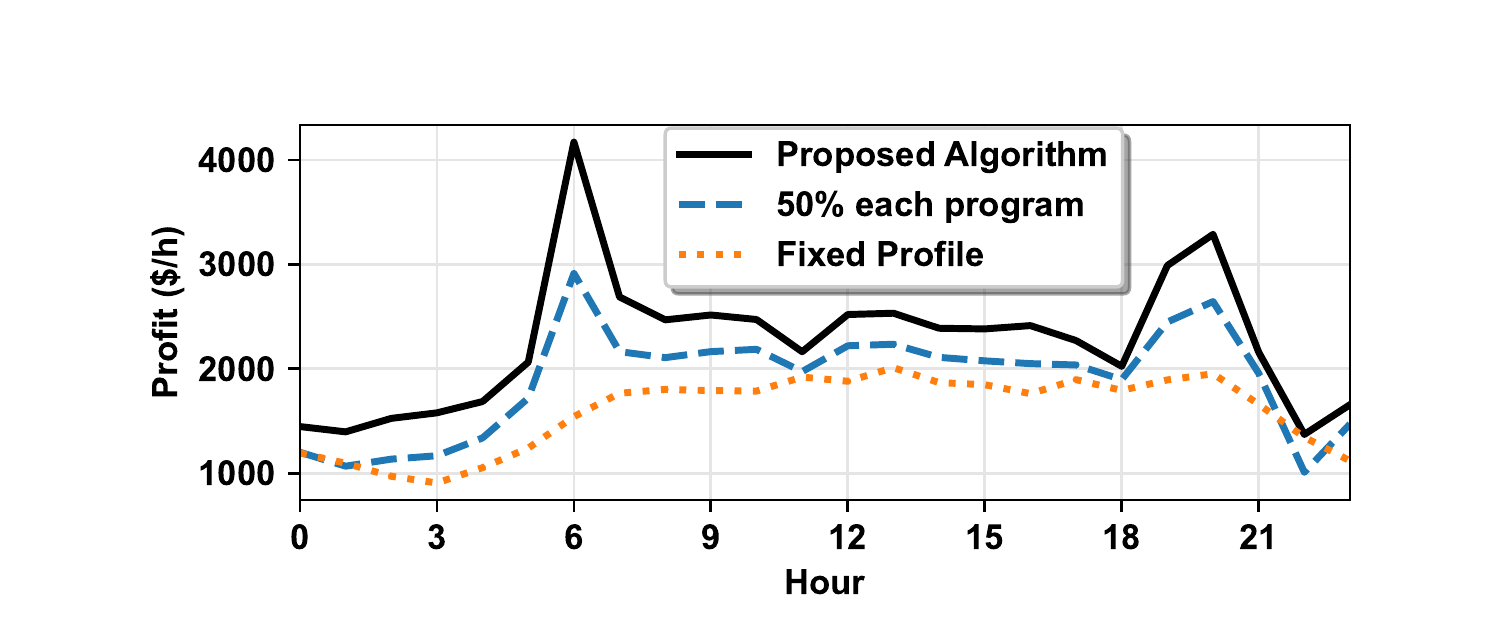}
\caption{Hourly profits averaged over one week in April 2020 using our algorithm versus alternative approaches.}
\label{fig:prices_under_alternatives}
\end{figure}

\subsubsectionCUSTOM{General Setting Using Stochastic Subgradient Descent}
\label{sec:general_sgd_results}
In this section, we examine a broader subset of profile selection problems for which the exact analytical solution is unavailable. Despite such constraint, we are still guaranteed that the optimization problem is convex. We utilize the stochastic subgradient descent method, as discussed in Section \ref{sec:problem}, to find the optimal solution within the feasible region.

In this specific case study, we consider the same pair of programs used in Section \ref{sec:risk_aware_results}, namely: price-responsive and reg-up, with the experiment conducted over one week in April of 2020 and results shown in Fig. \ref{fig:prices_under_alternatives}. We compare our proposed algorithm to the following alternative approaches: (a) an hour-independent profile, labeled \textit{Fixed Profile} in the figure, where we use stochastic subgradient descent to determine an optimal profile, but the same for all hours. (b) a profile which splits participation evenly across programs, hence the label of \textit{50\% each program}. 

The 50-50 strategy outperforms the hour-independent strategy for almost all the hours, further supporting the discovery from Section \ref{sec:reg_up_dn_results} that optimal profile selection should be tailored to each hour. Our proposed algorithm outperforms the 50-50 strategy by almost 20\%, and it also outperforms the trivial approach of 0\% participation, supported by the fact that profits are positive.

\begin{figure}[htbp!]
\centering
\includegraphics[width=\columnwidth]{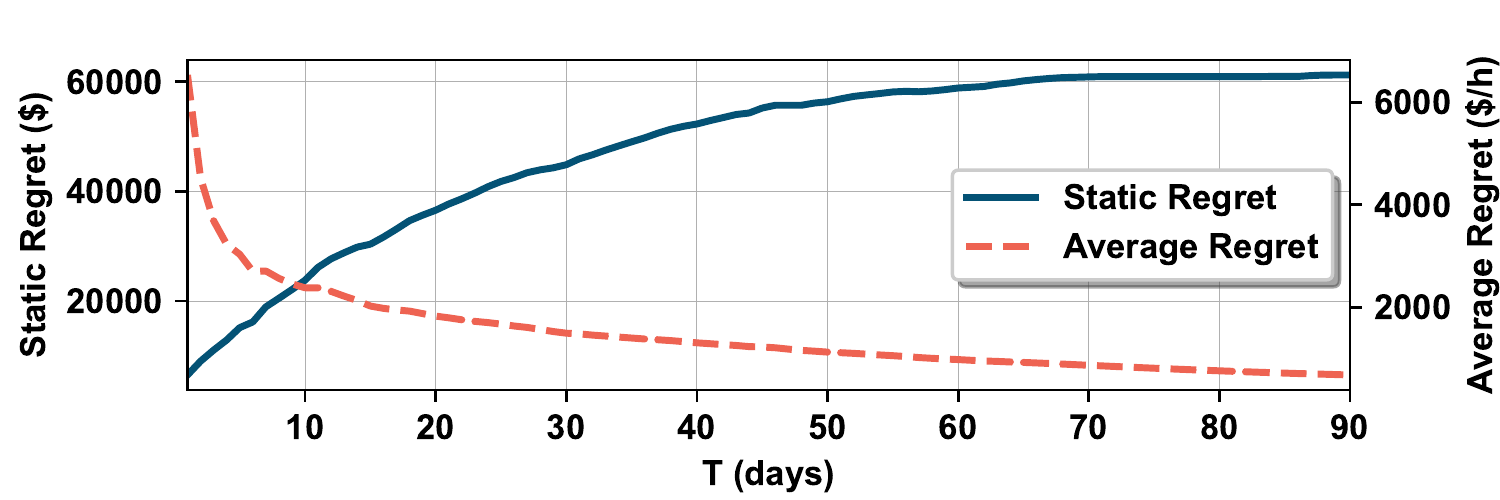}
\caption{Static and average regret over multiple rounds during a winter period, from Dec 2021 to Mar 2022.}
\label{fig:regret}
\end{figure}

\subsection{Online Optimization Case}
\label{sec:online_results}
In this section, we perform the online optimization method described in Section \ref{sec:online}, over the three-month 2021/2022 winter period and show how the algorithm learns to minimize average regret over time, as shown in Fig. \ref{fig:regret}. As expected, even though the static regret continues to increase, it eventually plateaus indicating closeness to the optimal profile. As the total number of rounds $T$ increases, the average regret approaches zero. The strategy can closely track the best solution on hindsight after 70 days without much prior knowledge.

\section{Conclusion}\label{sec:conclusion}

In this work, we presented a broad set of problem formulations in the context of cryptocurrency miners' participation in ancillary service programs, including online and offline optimization, as well as risk-aware portfolio selection. We evaluated our approach using numerical simulations based on real-world multi-year historical records of Texas-based electricity and ancillary services market data collected from ERCOT. We also rely on real-world data of Bitcoin prices, cryptomining machine specifications and cryptomining difficulty records to perform a realistic set of numerical experiments with results that can assist crptomining facilities in making informed decisions and maximizing their returns from ancillary service participation. We consider scenarios of cryptomining facilities optimizing over multiple ancillary service programs with multiple cryptomining machine types at their disposal, based on real-time pricing from both ERCOT and the Bitcoin network. Our results reveal the sensitivity of optimal decisions to hour-of-day, and further research can expand on this proposed formulation to consider more complex scenarios. The understanding of the optimal strategy that can be taken by crytominers will also benefit grid operators in their design of ancillary services, to reduce the operating cost and maintain system stability. 



\bibliographystyle{ACM-Reference-Format}
\bibliography{ref}

\appendix
\section{Proof of Theorem~\ref{thm:reg}} \label{appendixA}
\begin{proof} During the proof of this Theorem we will be using the truncated exponential distribution. Here we present the following results for this distribution, which will be used in the proof.

\begin{equation} 
		PDF(\epsilon) \mbox{=} \begin{cases}
		 \frac{\lambda e^{-\lambda \epsilon}}{1-e^{-\lambda}}	\, & \mathrm{if } \quad  0\leq \epsilon \leq 1,\\
			0 &  \mathrm{otherwise} ,
		\end{cases} 
\end{equation} 
it can be shown that given the above truncated distribution, we can calculate its expected value as follows: 
\begin{equation} 
\EX[\epsilon] = \int_{0}^{1} \epsilon \frac{\lambda e^{-\lambda \epsilon}}{1-e^{-\lambda}} \,d\epsilon = \frac{1-(\lambda+1)e^{-\lambda}}{\lambda(1-e^{-\lambda)}},
\end{equation} 
and we also have:

\begin{equation} 
\int_{0}^{b} x \lambda e^{-\lambda x} \,dx = \big[\frac{x}{-e^{\lambda x}}\big]_{0}^{b} - \big[\frac{1}{\lambda} e^{-\lambda x}\big]_{0}^{b}
\end{equation} 

Now we start the proof. For the case with two types of mining devices, and $N$ programs, we have the following cost function:

\begin{align*}
    cost(\epsilon, c) :=
    \begin{cases}
        \sum_{i=1}^N c_i\left[r_1\epsilon_i - p_i\right] &
        \text{if} \quad \sum_{i=1}^N c_i\epsilon_i \leq c^M_1 \\
        \sum_{i=1}^N c_i\left[r_2\epsilon_i - p_i\right] &\\
        \hspace{4em} + c^M_1\left[r_1 - r_2\right] &
        \text{otherwise}
    \end{cases}   
\end{align*}

Hence, for the expected value of cost we have:
\begin{eqnarray} \label{expcost}
    &&\EX_{_{\mathcal{\epsilon}}}[cost(\epsilon, c)] = \sum_{i=1}^N c_i(r_1\EX [\epsilon_i] - p_i) + \notag\\
    &&(r_1 - r_2) \EX \big[ \min \{c^M_1 -\sum_{i=1}^N c_i\epsilon_i,0 \} \big]
\end{eqnarray}

Consider that we only have two programs of reg-up and reg-down with the following joint probability distribution function:

\begin{equation} \label{pdf}
		PDF(\epsilon_1, \epsilon_2) = 
		\theta *\delta(\epsilon_1)  \underbrace{\frac{\lambda_2 e^{-\lambda_2 \epsilon_2}}{1-e^{-\lambda_2}}}_{\textit{reg-down}} + (1-\theta) * \delta(\epsilon_2) \underbrace{ \frac{\lambda_1 e^{-\lambda_1 \epsilon_1}}{1-e^{-\lambda_1}} }_{\textit{reg-up}},
\end{equation} 
From the above distribution it can be seen that either reg-up or reg-down are being deployed. In other words, when $\epsilon_1=0$, reg-down is deployed, and when $\epsilon_2=0$, reg-up is deployed. Now we consider each of these cases individually, and combine them according to their probability. It should be noted that  $\epsilon_2$ is the deployment ratio of the reg-down program. Hence, in the cost function \eqref{expcost}, we replace it with $1-\epsilon_2$.

\textbf{Reg-down ($\epsilon_1=0$)}: First we consider a case where $\epsilon_1=0$. To determine the second term in the cost function \eqref{expcost}, we need to always check the following inequality:
\begin{equation}
    c_1\epsilon_1 + c_2 (1-\epsilon_2) \leq c^M_1
\end{equation}
In the case with $\epsilon_1=0$, it becomes $(1-\epsilon_2)c_2 \leq c_1^M$. We consider two cases. First, if $c_2 \leq c_1^M$, we always have $(1-\epsilon_2)c_2 \leq c_1^M$, which makes the second part of the cost function equal to zero. Hence for the cost we have:

\begin{eqnarray}
cost_{dn1} = -p_1c_1- p_2c_2 + c_2 r_1((\frac{-1+\lambda_2+e^{-\lambda_2}}{\lambda_2(1-e^{-\lambda_2})}) 
\end{eqnarray}

Second, if $c_2 > c_1^M$, the condition $(1-\epsilon_2)c_2 \geq c_1^M$ will be satisfied if we have $0 \leq \epsilon_2 \leq 1- \frac{c_1^M}{c_2}$. We calculate the second part of the cost function as follows:

\begin{eqnarray}
 &&E =\int_{0}^{1- \frac{c_1^M}{c_2}} (c_1^M- (1-\epsilon_2)c_2) \frac{\lambda_2 e^{-\lambda_2 \epsilon_2}}{1-e^{-\lambda_2}}  \,d\epsilon_2 \notag \\
&&  =  \frac{c_1^M-c_2}{1-e^{-\lambda_2}} \big[1-e^{-\lambda_2 (1-\frac{c_1^M}{c_2})} \big]+ \frac{c_2}{1-e^{-\lambda_2}} \big[\frac{c_2-c_1^M}{-c_2e^{\lambda_2 (1- \frac{c_1^M}{c_2})}} \notag \\
&& - \frac{1}{\lambda_2} e^{-\lambda_2 (1- \frac{c_1^M}{c_2})}+\frac{1}{\lambda_2} \big]= \frac{c_1^M-c_2}{1-e^{-\lambda_2}} + \notag \\
&&\frac{ c_2 }{\lambda_2(1-e^{-\lambda_2})}(1-e^{-\lambda_2 (1- \frac{c_1^M}{c_2})}) 
\end{eqnarray}
Hence, for the cost function in this case we have:

\begin{eqnarray}
&& cost_{dn2} = cost_{dn1}  +  \\
&& (r_1-r_2) \big[ \frac{c_1^M-c_2}{1-e^{-\lambda_2}} + \frac{ c_2 }{\lambda_2(1-e^{-\lambda_2})}(1-e^{-\lambda_2 (1- \frac{c_1^M}{c_2})}) \big] \notag
\end{eqnarray}
Now, in the following, we look into the second case when $\epsilon_2=0$.
\textbf{Reg-up ($\epsilon_2=0$)}:
When reg-up is deployed we have $\epsilon_2=0$. To solve this problem we consider three different cases for $c_1$ and $c_2$ as follows:

First, $c_1+c_2 \leq c_1^M$: in this case $\epsilon_1 c_1 + c_2 \leq c_1^M$, and $E=0$, so for the cost function we have:

\begin{eqnarray}
 cost_{up1} =c_1 \big(r_1 \frac{1-(\lambda_1+1)e^{-\lambda_1}}{\lambda_1(1-e^{-\lambda_1})}-p_1\big) +c_2 (r_1- p_2) 
\end{eqnarray}

Second, $c_1+c_2 > c_1^M$ and $c_2 < c_1^M $: in this case if we have $\epsilon_1 c_1+c_2 > c_1^M$ we have non-zero expected value, which happens when $\epsilon_1 > \frac{c_1^M-c_2}{c_1}$. Hence, we have:

\begin{eqnarray}
&& E =  \int_{\frac{c_1^M-c_2}{c_1}}^{1} (c_1^M- \epsilon_1 c_1-c_2) \frac{\lambda_1 e^{-\lambda_1 \epsilon_1}}{1-e^{-\lambda_1}}  \,d\epsilon_1 \\
&&  =  \frac{c_1^M-c_2}{1-e^{-\lambda_1}} \big[-e^{-\lambda_1}+e^{-\lambda_1 \frac{c_1^M-c_2}{c_1}} \big]- \notag \\
&&\frac{c_1}{1-e^{-\lambda_1}} \big[\frac{1}{-e^{\lambda_1}} -\frac{1}{\lambda_1} e^{-\lambda_1}+ 
\frac{c_1^M-c_2}{c_1e^{\lambda_1 \frac{c_1^M-c_2}{c_1}}}+ 
\frac{1}{\lambda_1} e^{-\lambda_1 \frac{c_1^M-c_2}{c_1}} \big]  \notag \\
&&= \frac{c_1+ c_2-c_1^M}{1-e^{-\lambda_1}} e^{-\lambda_1} +  \frac{c_1}{\lambda_1(1-e^{-\lambda_1})} ( e^{-\lambda_1}- e^{-\lambda_1 \frac{c_1^M-c_2}{c_1}}) \notag
\end{eqnarray}
Hence the cost function is:
\begin{eqnarray}
  && cost_{up2} = cost_{up1}  + (r_1-r_2) \big[ \frac{c_1+ c_2-c_1^M}{1-e^{-\lambda_1}} e^{-\lambda_1}\notag \\
  && \frac{c_1}{\lambda_1(1-e^{-\lambda_1})} ( e^{-\lambda_1}- e^{-\lambda_1 \frac{c_1^M-c_2}{c_1}}) \big] 
\end{eqnarray}

Third, $c_1+c_2 > c_1^M$ and $c_2 \geq c_1^M $: in this case we always have $\epsilon_1 c_1+c_2 \geq c_1^M$. Hence, we have:

\begin{eqnarray}
&& E =  \int_{0}^{1} (c_1^M- \epsilon_1 c_1-c_2) \frac{\lambda_1 e^{-\lambda_1 \epsilon_1}}{1-e^{-\lambda_1}}  \,d\epsilon_1 = \notag \\
&& c_1^M-c_2 -c_1 \big( \frac{1-(\lambda_1+1)e^{-\lambda_1}}{\lambda_1(1-e^{-\lambda_1})}  \big)
\end{eqnarray}
Hence, for the cost we have:
\begin{eqnarray}
   cost_{up3} =  cost_{up1} + (r_1-r_2) \big[ c_1^M-c_2-c_1 \big( \frac{1-(\lambda_1+1)e^{-\lambda_1}}{\lambda_1(1-e^{-\lambda_1})}  \big) \big]
\end{eqnarray}
From the probability distribution function \eqref{pdf} it can be seen that with probability $\theta$, reg-down is deployed, and with probability $1-\theta$, reg-up is deployed. Hence, the expected cost for $0 \leq c_1+c_2 \leq c_1^M+c_2^M$ can be calculated as
\begin{eqnarray}
    \text{cost}(t) =
    \begin{cases}
       Q_1 &
        \text{if} \quad c_1+c_2 < c_1^M \\
        Q_2 &
        \text{if} \quad c_2 \geq c^M_1 \\
        Q_3&
        \text{if} \quad otherwise
    \end{cases}
\end{eqnarray}
where, 

\begin{eqnarray}
   && Q_1 = \theta cost_{dn1}+ (1-\theta)cost_{up1} \notag \\
    &&  Q_2 = \theta cost_{dn2}+(1-\theta) cost_{up3}\notag \\
    && Q_3 =\theta cost_{dn1}+ (1-\theta)cost_{up2}
\end{eqnarray}
This completes the proof.
\end{proof}

\section{Proof of Theorem~\ref{online}}
\label{proof:online}
\begin{proof}
Let us denote $D$ as an upper bound
on the diameter of the feasible region $\mathcal{F}_c$:
\begin{equation}
    \forall x,y \in \mathcal{F}_c, \quad ||x-y|| \leq D
\end{equation}
 If the norms of the subgradients of the function $f$ over $\mathcal{F}_c$ are upper bounded by $G$, 
($||\nabla f(x)||\leq G, \, \, \forall x \in \mathcal{F}_c$), function $f$ is Lipschitz continuous with parameter $G$.
According to \cite{hazan2016introduction},
the online gradient descent algorithm with step sizes ${\eta_t = \frac{D}{G\sqrt{t}}, \, \, t \in [1:T] }$ guarantees the following for all $T \geq 1$:

\begin{equation}
    R \leq \frac{3}{2} GD \sqrt{T}
\end{equation}

To determine the value of $G$, and $D$ for our setting, we assume that  $ 0 \leq r_K(t) \leq r_{max}$, and $0 \leq p_i(t)<p_{max}$. To calculate $D$, one should note that the feasible region $\mathcal{F}_c$ is given by:

\begin{equation}
  \mathcal{F}_c \coloneqq \{c \in \mathbb{R}^N | \quad c \geq 0, \quad \sum_{i=1}^{N} c_i < C^M\},
\end{equation}

hence, the maximum distance of any two points in this regions is calculated as follows:
\begin{equation}
 D =\begin{cases}
        C^M&
        \text{if} \quad N=1 \\
         \sqrt{2}C^M &
        \text{if}\quad N \geq 2,
\end{cases}   
\end{equation}
On the other hand, to calculate $G$, we know that depending on $c(t)$, the subgradient of the cost function $cost_t(c(t))$ given in \eqref{costonline}, is always calculated as
$[r_{k_c}(t)\epsilon_i(t) - p_i(t)]_{i=1}^N$. Given that $r_{k_c}(t)\epsilon_i(t)$ and $p_i(t)$ are both positive, we have:
\begin{eqnarray}
      &&r_{k_c}(t)\epsilon_i(t) - p_i(t) \leq \max \{r_{k_c}(t)\epsilon_i(t), p_i(t) \} \leq \notag \\
      &&\max \{r_{k_c}(t), p_i(t) \} \leq max \{r_{max}, p_{max} \} 
\end{eqnarray}
Given that the subgradient is an $N$ dimensional vector, we have $G=\sqrt{N} \max \{r_{max}, p_{max}\}$. Hence, for $N\geq 2$, for the regret bound we have:

\begin{equation}
    R \leq \frac{3}{2} GD \sqrt{T} = \frac{3 }{2} * \max \{r_{max}, p_{max}\} * C^M \sqrt{2TN}
\end{equation}

This completes the proof.

\end{proof}

\end{document}